\newcommand{\cel}[1]{\ushort{#1}}
\newcommand{\celm}[1]{\cel{\mat{#1}}}
\newcommand{\cP}{\ensuremath{\cel{P}}}
\providecommand{\cmP}{\ensuremath{\celm{P}}}
\newcolumntype{Y}{>{\raggedright\arraybackslash}X}
\newcolumntype{W}{>{\raggedleft\arraybackslash}X}
\newcolumntype{Z}{>{\centering\arraybackslash}X}
\newcolumntype{H}{>{\setbox0=\hbox\bgroup}c<{\egroup}@{}}
\begin{document}
	\title{Retrospective Higher-Order Markov Processes for User Trails}

\author{Tao Wu}
\affiliation{%
  \institution{Purdue University}
  \city{West Lafayette} 
  \state{IN} 
}
\email{wu577@purdue.edu}


\author{David F. Gleich}
\affiliation{%
  \institution{Purdue University}
  \city{West Lafayette} 
  \state{IN} 
}
\email{dgleich@purdue.edu}



\begin{abstract}
Users form information trails as they browse the web, checkin with a geolocation, rate items, or consume media. A common problem is to predict what a user might do next for the purposes of guidance, recommendation, or prefetching. First-order and higher-order Markov chains have been widely used methods to study such sequences of data. First-order Markov chains are easy to estimate, but lack accuracy when history matters. Higher-order Markov chains, in contrast, have too many parameters and suffer from overfitting the training data. Fitting these parameters with regularization and smoothing only offers mild improvements.  In this paper we propose the retrospective higher-order Markov process (RHOMP) as a low-parameter model for such sequences. This model is a special case of a higher-order Markov chain where the transitions depend retrospectively on a single history state instead of an arbitrary combination of history states. There are two immediate computational advantages: the number of parameters is linear in the order of the Markov chain and the model can be fit to large state spaces. Furthermore, by providing a specific structure to the higher-order chain, RHOMPs improve the model accuracy by efficiently utilizing history states without risks of overfitting the data. We demonstrate how to estimate a RHOMP from data and we demonstrate the effectiveness of our method on various real application datasets spanning geolocation data, review sequences, and business locations. The RHOMP model uniformly outperforms higher-order Markov chains, Kneser-Ney regularization, and tensor factorizations in terms of prediction accuracy.

\end{abstract}

%
%



\keywords{Higher-order Markov chains; Tensor factorization; User models}

\maketitle


\section{Introduction} 
\label{sec_intro}

User trails record sequences of activities when individuals interact with the Internet and the world. Such data come from various applications when users write a product review~\cite{mcauley2013amateurs}, checkin at a physical location~\cite{cho2011friendship,yang2013fine}, visit a webpage, or listen to a song~\cite{celma2009music}. Understanding the properties and predictability of these data helps improve many downstream applications including overall user experiences, recommendations, and advertising~\cite{figueiredo2016tribeflow,awad2012prediction}. We study the prediction problem and our goal is to estimate a model to describe and predict a set of user trails. 

Markov chains are one of the most commonly studied models for this type of data. For these models, each checkin place, website, or song is a state. Users transition among these states following Markov rules. In a first-order Markov model, the transition behavior to the next state of the sequence only depends on the current state. Higher-order Markov models include a more-realistic dependence on a larger number of previous states, and multiple recent studies found that first-order Markov chains do not fully capture the user behaviors in web browsing, transportation and communication networks~\cite{rosvall2014memory,chierichetti2012web}. Furthermore ignoring the effects of second-order Markov dynamics has significant negative consequences for downstream applications including community detection, ranking, and information spreading~\cite{rosvall2014memory,Benson-2016-motif-spectral}.

The downside to higher-order Markov models is that the number of parameters grows exponentially with the order. (If there are $N$ states and we model $m$ steps of history, there are $N^{m+1}$ parameters.) So, even if we could accurately learn the parameters, it is already challenging to even store them. (Some practical techniques include low-rank and sparse approximations, but these pose their own problems.) Second, since the number of model parameters grows rapidly, the amount of training data required also grows exponentially with the order $m$~\cite{chierichetti2012web}. Acquiring such huge amounts of training data is usually impossible. Lastly, determining the amount of history to use itself is hard~\cite{peres2005two}, and selecting a large value of $m$ could severely overfit the data, thus making the learned model less reliable.

Strategies to resolve the above issues of higher-order Markov chains include variable order Markov chain~\cite{buhlmann1999variable} where the order length is a variable that can have different values for different states. There is a fitting algorithm that can automatically determine an appropriate order for each state, however it requires substantial computation time~\cite{ron1994learning} which restricts it to applications with only a small number of states~\cite{borges2007evaluating,deshpande2004selective,chierichetti2012web}.  Smoothing and regularization methods~\cite{chen1996empirical} like Kneser-Ney smoothing and Witten-Bell smoothing are additional approaches to make the higher-order Markov chain more robust. These methods are widely applied in language models for predicting unseen transitions. We will compare against the behavior of the Kneser-Ney smoothing in our experiments and show that our method has a number of advantages.

In this paper we propose the retrospective higher-order Markov process (RHOMP) as a simplified, special case of a higher-order Markov chain (Section~\ref{sec_method}). In this type of Markov model, a user retrospectively choses a state from the past $m$ steps of history, and then transitions as a first-order chain conditional on that state from history. This assumption helps to restrict the total number of parameters and protect the model from overfitting the correlations between history states. Specifically, this model corresponds to choosing $m$ different first order Markov chain transition matrices, one for each step of history,  as well as an associated probability distribution. Consequently, the number of parameters grows linearly with the size of history while preserving the higher-order nature. We also show there are important connections between our model and the class of pairwise-interaction tensor factorization models proposed by Rendle et al.~\cite{rendle2010pairwise,rendle2010factorizing} (Section~\ref{sec_pairwise}). 

We design an algorithm to select an optimal model from training data via  maximum likelihood estimation (MLE). For the second-order case with two steps of history, this yields a constrained convex optimization problem with a single hyperparameter $\alpha$. We derive a projected gradient descent~\cite{duchi2008efficient} algorithm to solve it. It requires only a few iterations to converge and each iteration is linear in the training data. We select the hyperparameter by fitting a polynomial to the likelihood function as a function of the parameter and select the global minimum. Thus, our RHOMP process does not require any parameter tuning and is scalable to applications with tens of thousands of
states. In addition, both the process of updating the gradients and model parameters parallelize over the training data.


We evaluate the effectiveness of RHOMP models in experiments ~\footnote{Code and data for this paper are available at: \url{https://github.
com/wutao27/RHOMP}.}
with real datasets including product reviews, online music streaming, photo locations, and checkin business types (Section~\ref{sec_data}).  We primarily compare algorithms in terms of their ability to predict information from testing data and use precision and mean reciprocal rank as the two main evaluation metrics. These experiments and results show that the RHOMP model achieves superior prediction results in all datasets (Section~\ref{sec_result}) compared with first and second order chains. For even higher-order chains, RHOMP shows stable performance with one exception (Section~\ref{sec_exp2}) where the data only has short sequences. 

\textbf{Remark.} Recently Kumar et.~\cite{kumar2017linear} proposed the Linear Additive Markov Process (LAMP) that is closely related to our framework. Specifically our RHOMP model has the same formulation as the generazlied extention GLAMP from the paper~\cite{kumar2017linear}. We learned about this paper as we were finalizing our submission to arXiv. The papers share a number of related technical results about the models and we discovered the related work~\cite{markov1971extension,zhang2015spatiotemporal,melnyk2006memory,usatenko2009random} based on their manuscript. The main difference is that in this paper we focus on the general form that allows to learn different Markov chains for each step of history. In addition we connect the RHOMP model with a particular tensor factorization to a higher-order Markov chain.

\section{PRELIMINARIES} 
\label{sec_prelim}

We begin by formally reviewing the problem of user trail prediction.
Then we will review relevant background on Markov chain models.

\subsection{Problem Formulation}
We denote a user trail as a sequence over a discrete state space
$s = (s_1, s_2, \cdots)$ with each element $s_i \in \{1,2,\cdots, N\}$. Here
$N$ is the total number of states.
The sequence can represent, for instance, a user's music listening history with each state denoting
a song/artist, or a user's checkin history from social network with each state denoting a location.
Given a specific user trail up to time $t-1$: $s = (s_1, s_2, \cdots, s_{t-1})$ with $t \geq 2$,
the task is to predict the next state at time $t$ based on a large set 
of user trails for training: $\mathcal{S} = \{s^{(1)}, s^{(2)},\cdots\}$, where each $s^{(i)}$ is
an individual trail.

\subsection{Markov Chain Methods}
An $m-$th order Markov chain is defined as a stochastic process
$\{ X_t, t = 1,2,\cdots \}$ on the state space: $\{1,2,\cdots, N\}$ with the property that the next transition only depends on the last $m$ steps. Formally, 
\[
\begin{split}
& \Pr\big(X_{t} = i \mid X_{t-1} = i_{t-1},\cdots,X_1 = i_1\big)\\
& \qquad = \Pr\big(X_{t} = i \mid X_{t-1} = i_{t-1},\cdots,X_{t-m} = i_{t-m}\big).
\end{split}
\]
An $(m+1)$-order transition tensor $\cmP$ with size $N$ characterizes the above Markov chain,
with $\cP_{i,j,\cdots,k}$ denoting the probability of transitioning to state $i$
given the $m$ current history states $(j, \cdots, k)$.
The model with $m=1$ is called the first-order Markov chain and similarly it can be described by an
$N \times N$ transition matrix $\mP$.


In order to use a Markov chain for the prediction problem, we need
to estimate the transition matrix $\mP$. 
Given a set of users trails $\mathcal{S} = \{s^{(1)}, s^{(2)},\cdots\}$, the maximum likelihood estimator (MLE)
of the probability $P_{i,j}$ for a first order chain is given by~\cite{chierichetti2012web}:
\[
P_{i,j} = \frac{c(i,j)}{\sum_{\ell} c(\ell,j)}
\]
where $c(i,j)$ denotes the number of instances that the states $j$ and $i$ were consecutive in all trails. For the case of higher-order Markov chain, it is well-known that any higher-order ($m>1$) Markov chain $X_t$ is equivalent to a first-order Markov chain $Z_t$ by taking a Cartesian product of its state space. This simplifies the parameter estimations and we may replace the original states with the Cartesian product states:
\[
\cP_{i,j,\cdots,k} = \frac{c(i,j,\cdots,k)}{\sum_{\ell} c(\ell,j,\cdots,k)},
\]
where now $c(i,j,\cdots,k)$ counts the number of instances of the sequence $k,\cdots,j,i$ in the training data.

Returning to the prediction task itself, Markov chain methods take as input the
history states of a trail and lookup the probabilities for all future states in the
matrix $\mP$ or tensor $\cmP$. This becomes 
a ranked list of states with the highest probability on top.


\section{Retrospective Higher-Order Markov Processes} 
\label{sec_method}

The goal of the retrospective higher-order Markov process (RHOMP)
is to strike a balance between the simplicity of the first order
Markov model and the high-parameter complexity of the higher-order
Markov model. Nevertheless, it is important for the model to 
account for higher-order behaviors because these are necessary to capture many types of user behaviors~\cite{rosvall2014memory,chierichetti2012web}. 
Towards that end, the RHOMP model describes a \emph{structured} higher-order Markov chain that results in a compact low-parameter description of possible user behaviors. We describe this formally for the case of a second-order history (and discuss largely notational extensions to higher-order chains in Section~\ref{sec_higher}). 


\subsection{The Retrospective Process} 
\label{sec_model}
The specific structure that a RHOMP describes is a retrospectively first-order Markov property. For some intuition, suppose that a web surfer had visited a search-query result page and then clicked the first link. In the RHOMP model, the user will first determine if they are going to continue browsing from the search-result page or the first link---hence users have the power to retrospect over history. Once that decision has been made, the user will behave in a first-order Markovian fashion that depends on \emph{if} the user returned to the previous state or remained on the current state.  Formally, suppose that the chain has recently visited states $j$ and $k$. The RHOMP is a two-stage process that \emph{first} selects a single history state. Since there are only two states, we model this selection as a weighted coin-toss where the probability of picking $j$ is $\alpha$ and so picking $k$ happens with probability $1-\alpha$. Once we have the history state, then the RHOMP transitions according to a transition matrix that is specific to that step of the history. Thus 
\[ \Pr\big(X_{t} = i \mid X_{t-1} = j,X_{t-2} = k\big) = \alpha R_{i,j} + (1-\alpha) Q_{i,k}, \]
where $\mR$ models the transitions from the current state (when those are selected) and $\mQ$ models the transitions from the previous state (when those are selected). See Figure~\ref{fig_illustration} for illustration. We summarize this in the following definition:
\begin{definition}
	Given $0\leq \alpha \leq 1$ and two stochastic matrices $\mR,\mQ$, a second-order retrospective higher-order Markov process 
	will transition from state $j$ with history state $k$ as follows: (i) with probability $\alpha$ it transitions 
	according to $\mR$ with the current state $j$, and
	(ii) with probability $1 - \alpha$ it transitions according to $\mQ$ with the previous state $k$.
\end{definition}

\begin{figure}[tb]
\includegraphics{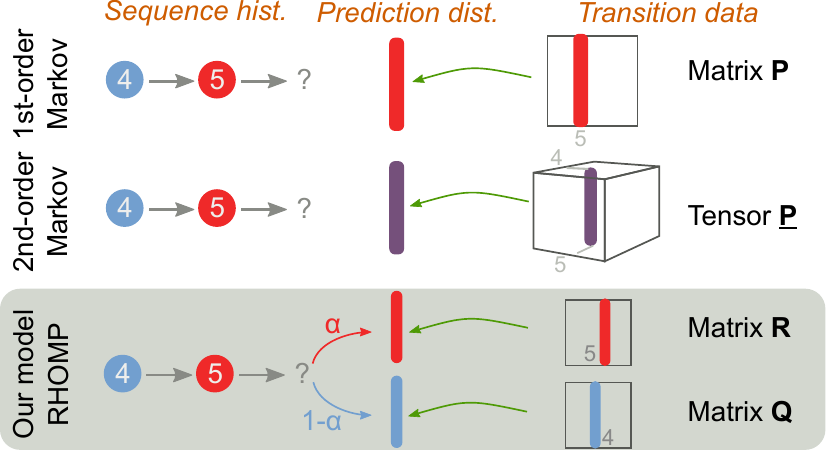}
\caption{An illustration of Markov chain methods and our proposed RHOMP model.}
\label{fig_illustration}
\end{figure}

This model has a number of useful features. For instance, it is easy to compute the stationary distribution as the following theorem shows.
\begin{theorem}
	Let $\alpha, \mR, \mQ$ be a second-order RHOMP model. Consider the stationary distribution $\vx$ in terms of the long-term fraction of time the process spends in a state: 
	\[ x_i = \lim_{t \to \infty} \frac{\text{number of times $X_t = i$}}{t}  \quad \text{ for each $i = 1 \ldots N$}. \]
	Such a distribution $\vx$ always exists. Moreover, it is unique if $\alpha \mR + (1-\alpha) \mQ$ is an irreducible matrix.   
\end{theorem}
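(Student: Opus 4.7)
The plan is to lift the second-order RHOMP to an equivalent first-order Markov chain on the product space $\{1,\ldots,N\}^2$ and then exploit the convex-combination structure of the transitions to reduce the stationary analysis back to an $N$-state chain driven by $\alpha \mR + (1-\alpha)\mQ$.

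First I would define $Z_t := (X_t, X_{t-1})$, which by the RHOMP definition is a first-order chain with $\Pr(Z_{t+1} = (i,j) \mid Z_t = (j,k)) = \alpha R_{i,j} + (1-\alpha) Q_{i,k}$ and zero for any other target. Because $Z_t$ has a finite state space, the standard Markov chain existence theorem (Brouwer/Perron--Frobenius applied to a stochastic matrix) supplies at least one stationary distribution $\pi$ on pairs, and the ergodic theorem for finite Markov chains guarantees that the Cesaro time-averages $\tfrac{1}{t}\sum_{s=1}^{t} \mathbf{1}\{X_s = i\}$ converge to a limit with expectation equal to the marginal $x_i := \sum_{j} \pi(i,j)$. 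This establishes existence of $\vx$.

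Second I would derive the key identity that $\vx$ must satisfy $(\alpha\mR + (1-\alpha)\mQ)\vx = \vx$. Writing the stationary equation $\pi(i,j) = \sum_k \pi(j,k)\bigl[\alpha R_{i,j} + (1-\alpha) Q_{i,k}\bigr]$ and summing over $j$ gives
\[
x_i \;=\; \alpha \sum_j R_{i,j}\Bigl(\sum_k \pi(j,k)\Bigr) + (1-\alpha)\sum_k Q_{i,k}\Bigl(\sum_j \pi(j,k)\Bigr).
\]
The crucial observation is that under stationarity the two one-dimensional marginals of $\pi$ coincide: $\sum_k \pi(j,k) = \sum_k \pi(k,j) = x_j$, because $Z_t = (X_t, X_{t-1})$ being time invariant forces $X_t$ and $X_{t-1}$ to share a law. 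Substituting yields $x_i = \alpha (\mR\vx)_i + (1-\alpha)(\mQ\vx)_i$, as desired.

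Finally, if $\alpha\mR + (1-\alpha)\mQ$ is irreducible then, as a stochastic irreducible matrix, it admits a unique invariant distribution by Perron--Frobenius; combined with the identity above, every marginal arising from any stationary $\pi$ of $Z_t$ must agree with this unique vector, giving uniqueness of $\vx$. The main obstacle I anticipate is rigorously justifying the coincidence-of-marginals step, since the joint chain $Z_t$ need not itself be irreducible even when $\alpha\mR+(1-\alpha)\mQ$ is; the subtlety is resolved by working only at the level of the marginal rather than trying to argue uniqueness of the joint stationary $\pi$.
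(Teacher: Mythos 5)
Your proposal is correct and follows essentially the same route as the paper: lift to the first-order chain on the Cartesian product space for existence, use the coincidence of the two one-dimensional marginals of the stationary pair distribution to derive $\vx = (\alpha\mR + (1-\alpha)\mQ)\vx$, and invoke irreducibility of that combined matrix for uniqueness. Your explicit remark that uniqueness should be argued only at the marginal level (since the joint chain need not be irreducible) is a point the paper leaves implicit, but the argument is the same.
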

\begin{proof} 
	Because the RHOMP is a special case of a second-order chain, we can use the relationship with the first-order chain on the Cartesian product space to establish that a distribution $\vx$ always exists. This follows because the long-term distribution of a first-order, finite-state space Markov chain always exists (though there could be multiple such distributions)~\cite{taylor2014introduction}. Let $X_{i,j}$ for all $1\leq i,j \leq N$ be any limiting distribution of the product state space, and $\vx$ be either of the corresponding marginal distribution such that
	$\sum_{j}X_{j,k} = x_{k}$ or $\sum_{k}X_{j,k} = x_{j}$. Note that both of these marginals result in the same distribution because 
	we use the long time average to define $X_{i,j}$.
    Then we have:
	\[
	\begin{split}
	x_{i} & = \sum_{j}X_{i,j}
	= \sum_{j}\sum_{k} (\alpha R_{i,j} + (1-\alpha) Q_{i,k})X_{j,k} \\
	& = \sum_{j} \alpha R_{i,j} x_{j} + \sum_{k} (1-\alpha)Q_{i,k}x_{k} = (\mP\vx)_{i}
	\end{split}
	\]
	where $\mP$ is defined as $\alpha \mR + (1- \alpha)\mQ $. So the limiting distribution
	$\vx$ follows $\vx = \mP \vx$, and it is unique if the corresponding Markov chain $\mP$ is
	irreducible.
\end{proof}

In Section~\ref{sec_opt}, we show how to compute a maximum likelihood estimate of $\mR$ and $\mQ$ from data.

\subsection{A Tensor Factorization Perspective}
\label{sec_pairwise} 
We originally derived this type of RHOMP via a tensor factorization approach, but then realized that the retrospective interpretation is more direct and helpful. Nevertheless, we believe there are fruitful connections established by the tensor 
factorization approach. Consider the transition tensor of a second-order Markov
chain:  $\cmP$ is a $3$-mode, $N \times N \times N$, non-negative tensor such that 
\begin{equation}
\label{eq_constraint}
\begin{split}
\sum_{i}\cP_{i,j,k} = 1 \text{ for all } 1 \leq j,k \leq N.
\end{split}
\end{equation}
This imposes a set of $N^2$ equality constraints. If we wanted to use traditional 
low-rank tensor approximations such 
as PARAFAC or Tucker~\cite{kolda2009tensor} to study large datasets, then we
would need to add a large number of constraints to the fitting algorithms in 
order to ensure that the factorization results in a stochastic tensor
that we could use for a second order Markov chain. This approach was extremely
challenging.

Instead, consider a pairwise interaction tensor factorization (PITF) as proposed by Rendle et al.~\cite{rendle2010pairwise} with the following form:
\begin{equation}
\label{eq_pair}
\cP_{i,j,k} = \sum_{\ell}A^{(J)}_{i,\ell} B^{(I)}_{j,\ell} + \sum_{\ell}A^{(K)}_{i,\ell} C^{(I)}_{k,\ell}
+ \sum_{\ell}B^{(K)}_{j,\ell} C^{(J)}_{k,\ell}
\end{equation}
where matrices $\mA^{(J)}, \mA^{(K)}, \mB^{(I)}, \mB^{(K)}, \mC^{(I)}, \mC^{(J)} \in \mathbb{R}^{N\times k}$.
We notice that last term in \eqref{eq_pair} is the interaction between the current state $j$ and the previous state $k$, and it contributes only a constant determined by the pair $(j,k)$. In the applications of prediction,
we can drop this term because it does not affect the relative ranking for the future state $i$. So the factorization
model becomes:
\begin{equation}
\label{eq_fac}
\cP_{i,j,k} = \sum_{\ell}A^{(J)}_{i,\ell} B^{}_{j,\ell} + \sum_{\ell}A^{(K)}_{i,\ell} C^{}_{k,\ell}
\end{equation}
with $\mA^{(J)}, \mA^{(K)}, \mB, \mC \in \mathbb{R}^{N\times k}$. 

To see the relationship with our RHOMPs, denote $\tilde{\alpha} \tilde{\mR} = \mA^{(J)}\mB^\intercal$ and $(1-\tilde{\alpha} )\tilde{\mQ} = \mA^{(K)}\mC^\intercal$ with
$0 \le \tilde{\alpha} \le 1$. Then the result of a PITF factorization with stochastic constraints is: 
\begin{equation}
\label{eq_model}
\cP_{i,j,k} = \tilde{\alpha} \tilde{R}_{i,j} + (1-\tilde{\alpha}) \tilde{Q}_{i,k}
\end{equation}
It is easy to verify that if both $\tilde{\mR}$ and $\tilde{\mQ}$ are stochastic matrices, then the corresponding
tensor $\cmP$ is a transition tensor following \eqref{eq_constraint}. The following
theorem shows that from any nonnegative $\tilde{\mR}$ and $\tilde{\mQ}$, we can construct such stochastic matrices.

\begin{theorem}
Assuming there exist nonnegative matrices $\tilde{\mR}$ and $\tilde{\mQ}$ such that the transition tensor $\cmP$ can
be decomposed in the form of \eqref{eq_model}, then there exist $0 \le \alpha  \le 1$ and 
stochastic matrices $\mR,\mQ$
such that
$\cP_{i,j,k} = \alpha R_{i,j} + (1-\alpha) Q_{i,k}$.
\end{theorem}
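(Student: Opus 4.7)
The plan is to show that the hypothesized decomposition can always be renormalized into one with stochastic factors and a convex weight. First I would absorb the scalar prefactors into the matrices: set $\mU = \tilde{\alpha}\tilde{\mR}$ and $\mV = (1-\tilde{\alpha})\tilde{\mQ}$. These are nonnegative and satisfy
\[ \cP_{i,j,k} = U_{i,j} + V_{i,k}. \]
The task thus reduces to: given nonnegative $\mU, \mV$ with this sum decomposition obeying the stochastic constraint \eqref{eq_constraint}, produce a weight $\alpha \in [0,1]$ and stochastic matrices $\mR, \mQ$ so that $\cP_{i,j,k} = \alpha R_{i,j} + (1-\alpha) Q_{i,k}$.

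The key step is a column-sum argument. Define $u_j = \sum_i U_{i,j}$ and $v_k = \sum_i V_{i,k}$. Summing $\cP_{i,j,k} = U_{i,j}+V_{i,k}$ over $i$ and using \eqref{eq_constraint} gives $u_j + v_k = 1$ for every pair $(j,k)$. Holding $k$ fixed and varying $j$ (and vice versa) forces both $u_j$ and $v_k$ to be independent of their indices, so there is a single scalar $\alpha$ with $u_j \equiv \alpha$ and $v_k \equiv 1-\alpha$. Nonnegativity of $\mU$ and $\mV$ forces $\alpha \in [0,1]$. With this $\alpha$ identified, define $R_{i,j} = U_{i,j}/\alpha$ and $Q_{i,k} = V_{i,k}/(1-\alpha)$; these are nonnegative with unit column sums, hence stochastic, and plugging back in reproduces $\cP_{i,j,k} = \alpha R_{i,j} + (1-\alpha) Q_{i,k}$.

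The only real subtlety, and where I expect the main (small) obstacle, is the boundary $\alpha \in \{0,1\}$, since the definitions of $\mR$ and $\mQ$ above divide by these quantities. If $\alpha = 0$, then every column sum of $\mU$ vanishes, and nonnegativity forces $\mU$ to be the zero matrix; I can then choose $\mR$ to be any stochastic matrix (the choice is immaterial because its weight is zero), while $\mQ = \mV$ is already stochastic by the column-sum argument. The case $\alpha = 1$ is symmetric. Once these degenerate cases are handled explicitly, the construction is valid in general and the theorem follows.
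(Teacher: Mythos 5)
Your proof is correct and follows essentially the same route as the paper: sum the decomposition over $i$, use the stochastic constraint to force the column sums of the two factors to be constants summing to one, and rescale to obtain stochastic matrices. Absorbing $\tilde{\alpha}$ and $1-\tilde{\alpha}$ into the matrices first is a nice touch that lets you treat the degenerate cases $\alpha\in\{0,1\}$ (and $\tilde{\alpha}\in\{0,1\}$) more cleanly than the paper's ``otherwise'' branch does.
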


\begin{proof}
Denote $\sum_{i} \tilde{R}_{i,j} = \tilde{r}_j$ and $\sum_{i} \tilde{Q}_{i,k} = \tilde{q}_k$ for all $1\leq j,k \leq N$.
Because $1 = \sum_{i}\cP_{i,j,k} = \tilde{\alpha} \tilde{r}_j + (1-\tilde{\alpha}) \tilde{q}_k$ for all $1\leq j,k \leq N$,
we have $\tilde{r}_1 = \tilde{r}_2 = \cdots = \tilde{r}_N = \tilde{r} \geq 0$, $\tilde{q}_1 = \tilde{q}_2 = \cdots = \tilde{q}_N = \tilde{q} \geq 0$
and $\tilde{\alpha} \tilde{r} + (1-\tilde{\alpha})\tilde{q} = 1$. If $\tilde{r}=1,\tilde{q}=1$ then
the original matrices $\tilde{\mR}$ and $\tilde{\mQ}$ are stochastic. Otherwise we can set
\[
\alpha = \tilde{\alpha} \tilde{r}; \quad \mR = \tilde{\mR}/\tilde{r}; \quad \mQ = \tilde{\mQ}/\tilde{q}
\]
where $\mR$ and $\mQ$ are stochastic. Then we have 
\[
\begin{split}
\alpha R_{i,j} + (1-\alpha) Q_{i,k} & = \tilde{\alpha} \tilde{R}_{i,j} + \frac{(1-\tilde{\alpha} \tilde{r})\tilde{Q}_{i,k}}{\tilde{q}} \\
& = \tilde{\alpha} \tilde{R}_{i,j} + (1-\tilde{\alpha}) \tilde{Q}_{i,k} = \cP_{i,j,k}
\end{split}
\]
So $(\alpha,\mR,\mQ)$ forms a valid factorization for $\cmP$, the bound on $\alpha$ follows from $\tilde{\alpha} \tilde{r} + (1 - \tilde{\alpha}) \tilde{q} = 1$ from~\eqref{eq_model}.
\end{proof}
Consequently, the RHOMP form also arises from the PITF approach when constrained to model stochastic tensors.

\subsection{Parameter Optimization} 
\label{sec_opt}
In this section we will apply the principle of maximum likelihood to estimate the model parameters of a RHOMP (i.e., $\mR,\mQ$) directly from data. An alternative would be to estimate the higher-order Markov chain and use the PITF factorization as discussed in the previous section. Working directly on the RHOMP model from data has two advantages: first, the estimate corresponds exactly with the model, rather than estimate and approximate; and second, the direct approach is faster.

We first show how to compute a maximum likelihood estimate with $\alpha$ fixed and then discuss how to pick $\alpha$. 
Recall that $c(i,j,k)$ is the total count of transitions moving from $j$ to $i$ with previous state $k$ in the
training data.
With fixed $\alpha$, the log
likelihood of all transitions from the set $\mathcal{S}$ of user trails is:
\begin{equation}
\label{equ_likelihood}
\begin{split}
\log\mathcal{L}(\mR, \mQ \mid \mathcal{S}) &\quad = \quad \sum_{\mathclap{c(i,j,k)>0}} c(i,j,k) \log(\cP_{i,j,k}) \\
&\quad  = \quad \sum_{\mathclap{c(i,j,k)>0}} c(i,j,k) \log(\alpha R_{i,j} + (1-\alpha) Q_{i,k}) 
\end{split}
\end{equation}
Our goal is to find a pair of stochastic matrices $\mR,\mQ$ which
minimizes the negative log likelihood, which 
gives us the following optimization problem:
\begin{equation}
\label{eq_opt}
\MINtwo{\mR,\mQ}
{-\log \mathcal{L}(\mR, \mQ \mid \mathcal{S})}
{R_{i,j} \geq 0,\quad Q_{i,j} \geq 0 \quad 1 \leq i,j \leq N }
{\sum_{i}R_{i,j} = 1, \quad \sum_{i}Q_{i,k} = 1 \quad 1 \leq i \leq N}
\end{equation}
This optimization problem is convex as the following theorem shows. 
\begin{theorem}
The negation of the log likelihood function in~\eqref{equ_likelihood} is convex and so is
the feasible region of pairs of stochastic matrices. Thus any
local minima solution $(\mR^*, \mQ^*)$ is also the solution for global mimima.
\end{theorem}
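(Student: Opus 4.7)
The plan is to verify the two convexity claims separately and then invoke the standard fact that a convex function over a convex set has only global minima.

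First I would show the feasible region is convex. The constraint set for each of $\mR$ and $\mQ$ is defined by the linear inequalities $R_{i,j} \geq 0$ (resp.\ $Q_{i,j} \geq 0$) together with the linear equalities $\sum_{i} R_{i,j} = 1$ (resp.\ $\sum_{i} Q_{i,k} = 1$). This is an intersection of finitely many halfspaces and hyperplanes, i.e.\ a (probability-simplex) polyhedron, so it is convex. The joint feasible region is a Cartesian product of two such convex sets, and Cartesian products of convex sets are convex.

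Next I would show the objective is convex in $(\mR,\mQ)$. Looking at~\eqref{equ_likelihood}, the negative log likelihood is a nonnegative combination (with weights $c(i,j,k) \geq 0$) of terms of the form $-\log(\alpha R_{i,j} + (1-\alpha) Q_{i,k})$. The key observation is that for each fixed $(i,j,k)$, the map $(\mR,\mQ) \mapsto \alpha R_{i,j} + (1-\alpha) Q_{i,k}$ is affine (in fact linear) in the matrix entries. Since $-\log$ is a convex function on $(0,\infty)$, composing it with an affine map yields a convex function on the region where the argument is positive. Summing convex functions with nonnegative coefficients preserves convexity, so $-\log \mathcal{L}$ is convex on the relative interior of the feasible region. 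At boundary points where $\alpha R_{i,j} + (1-\alpha) Q_{i,k} = 0$ for some $(i,j,k)$ with $c(i,j,k) > 0$, the value is $+\infty$ by the usual convention, which is consistent with convexity (a convex function may take the value $+\infty$).

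Finally, I would conclude by appealing to the standard result in convex analysis: if $f$ is convex on a convex set $\mathcal{C}$, then every local minimizer of $f$ over $\mathcal{C}$ is a global minimizer. Applying this to $f = -\log\mathcal{L}$ and $\mathcal{C}$ the product of the two probability simplices gives the claim. The only subtlety worth flagging is the boundary behavior of $-\log$, but since we only care about local minima of a finite-valued function (any point where the objective is $+\infty$ cannot be a local minimum unless the objective is identically $+\infty$), this does not cause any real obstacle; the argument is otherwise routine.
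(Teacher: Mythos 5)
Your proposal is correct and follows essentially the same route as the paper: both establish convexity of the product of probability simplices and then obtain convexity of the negative log likelihood from concavity of $\log$ applied to the affine expression $\alpha R_{i,j} + (1-\alpha) Q_{i,k}$ (the paper writes out the Jensen-type inequality explicitly where you invoke the convex-composed-with-affine rule, but the content is identical). Your extra care about the $+\infty$ boundary behavior is a small point the paper omits, but it does not change the argument.
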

\begin{proof}
First we verify the feasible domain of stochastic pairs $(\mR,\mQ)$ is convex. 
We can check that given $0\leq \lambda \leq 1 $ and two stochastic matrices $\mA,\mB$, the 
linear combination $\lambda \mA + (1-\lambda)\mB$ is also a stochastic matrix. This applies
element-wise to the pair to verify the claim. 

Now given two sets of stochastic matrices $(\mR^{(1)},\mQ^{(1)})$
and $(\mR^{(2)},\mQ^{(2)})$ and the corresponding linear combination
$(\mR = \lambda \mR^{(1)} + (1-\lambda)\mR^{(2)} ,\mQ = \lambda \mQ^{(1)} + (1-\lambda)\mQ^{(2)})$
we have
\[
\begin{split}
-  & \log\mathcal{L}(\mR, \mQ \mid \mathcal{S})  = 
- \sum_{i,j,k} c(i,j,k) \log(\alpha R_{i,j} + (1-\alpha) Q_{i,k}) \\
 & = - \sum_{i,j,k} c(i,j,k) \log\big( \lambda(\alpha R^{(1)}_{i,j} + (1-\alpha) Q^{(1)}_{i,k})\\
 &\qquad \qquad \qquad + (1-\lambda)(\alpha R^{(2)}_{i,j} + (1-\alpha) Q^{(2)}_{i,k}) \big) \\
 & \leq - \sum_{i,j,k} c(i,j,k) \big (\lambda \log (\alpha R^{(1)}_{i,j} + (1-\alpha) Q^{(1)}_{i,k})\\
 & \qquad \qquad \qquad + (1-\lambda)\log (\alpha R^{(2)}_{i,j} + (1-\alpha) Q^{(2)}_{i,k}) \big)\\
 & = - \lambda\log\mathcal{L}(\mR^{(1)}, \mQ^{(1)} \mid \mathcal{S}) -
 (1-\lambda) \log\mathcal{L}(\mR^{(2)}, \mQ^{(2)} \mid \mathcal{S})
\end{split}
\]
So~\eqref{eq_opt} is a convex problem. 
\end{proof}

We now derive the projected gradient descent algorithm for~\eqref{eq_opt}, which is summarized in Algorithm~\ref{alg_1}. 
This involves
\begin{enumerate}[leftmargin=0.3in]
	\item First update $\mR$ and $\mQ$ based on their gradients.
	\item Since $\mR$ and $\mQ$ are no longer stochastic due to the above updates, the projection
	step is applied to project the updated $\mR$ and $\mQ$ back to $\ell_1-balls$ (i.e., the stochastic property).
\end{enumerate}
The gradients over $\mR$ and $\mQ$ are:
\begin{equation}
\label{eq_gradient}
\begin{split}
\Delta R_{i,j} = \frac{-\partial \log \mathcal{L} }{\partial R_{i,j}} 
= \sum_{k}\frac{-\alpha c(i,j,k)}{\alpha R_{i,j} + (1-\alpha)Q_{i,k}}\\
\Delta Q_{i,k} = \frac{-\partial \log \mathcal{L} }{\partial Q_{i,k}}
= \sum_{j}\frac{-(1 - \alpha) c(i,j,k)}{\alpha R_{i,j} + (1-\alpha)Q_{i,k}}
\end{split}
\end{equation}
We accomplish the projection step using the algorithm from~\cite{duchi2008efficient}.
Note that for the sake of simplicity we present the projection step by sorting the vector
$\vw$, but there is a more efficient method based on divide and concur~\cite{duchi2008efficient} 
which is linear cost to the number non-zeros in $\vw$. However in practice sorting $\vw$ 
is fast as the vector $\vw$ is very sparse.

Overall each iteration takes linear time in the number of unique
triples $(i,j,k)$ in the sequence data. This is upper bounded by the size of input data. 
We also note that the procedure of computing the gradients
$\Delta \mR, \Delta \mQ$ and updating $\mR, \mQ$, which dominates the majority of the computation,
can be paralleled.



\begin{algorithm}[!htp]
\caption{Max.~Likelihood Estimate of a 2nd-order RHOMP}
\label{alg_1}
\begin{algorithmic}[1]
 \REQUIRE parameter $\alpha$, step size $\gamma_0$ and transition counts $c(i,j,k)$
\STATE Initialize $\mR$ with $R_{i,j} = \sum_{k}c(i,j,k)/\sum_{\ell,k}c(\ell,j,k)$,
$\mQ$ with $Q_{i,k} = \sum_{j}c(i,j,k)/\sum_{\ell,j}c(\ell,j,k)$ and $\gamma = \gamma_0$
\REPEAT
\STATE Compute the gradient matrices $\Delta \mR,\Delta \mQ$ based on ~\eqref{eq_gradient}
\STATE $\mR \leftarrow (\mR - \gamma \Delta \mR)$ and $\mQ \leftarrow (\mQ - \gamma \Delta \mQ)$
\FOR{each column vector $\vw$ of $\mR$ and $\mQ$}
\STATE Sort the non-zeros of $\vw$ into $\vu$: $u_1 \geq u_2 \geq \cdots \geq u_k > 0$
\STATE Find $\rho = \max \big\{ r \leq k: u_r - \frac{1}{r}(\sum_{i=1}^{r}u_i - 1) > 0\big\}$
\STATE Define $\theta = \frac{1}{\rho} (\sum_{i=1}^{\rho}u_i - 1)$
\STATE Update $\vw$ with $w_i \leftarrow \max\{ w_i - \theta, 0 \}$
\ENDFOR
\IF{objective value decreases}
\STATE $\gamma \leftarrow  \min\{ 2*\gamma, \gamma_0\}$
\ELSE
\STATE $\gamma \leftarrow 0.5*\gamma$; re-run this iteration with updated $\gamma$
\ENDIF
\UNTIL{converge}
\end{algorithmic}
\end{algorithm}

\noindent \textbf{Choosing $\bm{\alpha}$.} 
To determine the value of hyperparameter $\alpha$, we conduct a few trials with $\alpha$ 
chosen between $(0,1)$. Then based on the value of the objective function, we calculate the
best value of $\alpha$ from a polynomial interpolation of the likelihood function. 
Specifically $\alpha$ is selected
as $n$ Chebyshev nodes
$\alpha_k = \frac{1}{2} + \frac{1}{2}\cos(\frac{2k - 1}{2n}\pi),\ k = 1,2,\cdots,n$.
Getting the global minimum of a polynomial interpolant can be done efficiently, and polynomials
can approximate arbitrary continuous functions, which renders this a pragmatic choice. 
Another approach for selecting the value of $\alpha$ is to conduct cross validation with
grid search. However a different objective is needed as we could run into unseen transitions
in the validation set and the likelihood would go to $-\infty$. Alternatively we can use a
measurement like precision instead of likelihood. The main advantage of cross validation
is its ability to prevent overfitting. In our experiment we find this problem does not occur,
so we drop this procedure as it requires substantially more computation. 


\subsection{Higher-order Cases Beyond Second Order} 
\label{sec_higher}
The ideas discussed in the above sections also work for the higher-order cases
with $m\geq 3$. The RHOMP model becomes:
\[
\Pr(X_{t} = i \mid \scriptstyle X_{t-1} = j,X_{t-2} = k, \ldots, X_{t-m} = \ell ) \displaystyle 
= \alpha_1 R^{(1)}_{i,j} 
+\alpha_2 R^{(2)}_{i,k} + \cdots +\alpha_m R^{(m)}_{i,\ell} 
\]
where $0\leq \alpha_i \leq 1$ for $i=1,2,\cdots,m$, $\sum_{i}\alpha_i = 1$ and matrices
$\mR^{(i)}$ for $i=1,2,\cdots,m$ are stochastic. Similarly the log likelihood function
can be derived as well as the gradient over each $\mR^{(i)}$. The projected gradient descent 
algorithm is then applied to update each stochastic matrix $\mR^{(i)}$, with a
per-iteration complexity bounded by the size of the training data. 

The biggest difference
is that we are no longer able to determine the hyperparameters $\alpha_i$ in 
a simple fashion as the polynomial interpolation is only computationally efficient
for one or two parameters. To address this issue, recall that in Section~\ref{sec_model} 
we proposed the model as a retrospective walk, where the walker has probability $\alpha_k$ 
to step back ${k-1}$ steps into their history 
and then transition according to $\mR^{(k)}$. Our proposal is
to use a single hyperparameter $\beta< 1$ to model a decaying probability of looking back into the history:
\[ \alpha_1 = \tfrac{1-\beta^m}{1-\beta}, \quad \alpha_2 = \beta\tfrac{1-\beta^m}{1-\beta}, \quad \ldots, \quad \alpha_{m} = \beta^{m-1}\tfrac{1-\beta^m}{1-\beta}.\]
(This distribution describes a truncated geometric random variable.) 
In our experiments for the second-order case the optimal
$\alpha_1 > 1/2$ for every dataset. This offers a single step of evidence
for this assumption. 
This $\beta$ can be chosen either by the procedure of polynomial interpolation or simply 
using the optimal value $\alpha^*$ from a second-order factorization model
$\beta = \alpha^*/(1-\alpha^*)$. We apply the latter approach in our experiments for
RHOMP with $m>2$.



\section{Related Work} 
\label{sec_related}
 \textbf{Modeling User Trails.} Early work in~\cite{pirolli1999distributions} characterized the user path 
patterns on the web with the tools of Markov chains. Other advanced methods 
include hidden Markov models (HMM)~\cite{eddy1996hidden},
variable length Markov chains~\cite{buhlmann1999variable} and association rules~\cite{awad2012prediction}. 
However the computations associated with the above methods limit them from being used in datasets with
more than a few thousand states.
More recent work considers the sequence prediction task with personalization, such as collaborative filtering
methods~\cite{salakhutdinov2008bayesian,shi2014collaborative} where the behavior of similar users
is utilized to help the prediction, factorizing personalized Markov chains~\cite{rendle2010factorizing}, and
TribeFlow~\cite{figueiredo2016tribeflow}.
Other than the prediction problem, clustering and visualization~\cite{cadez2003model}, sequence classification~\cite{xing2010brief}, metric embedding~\cite{chen2012playlist,chen2013multi}
and hypotheses comparison~\cite{singer2015hyptrails} have also been studied. In the context of this work,
we seek to improve the performance of the classic and simple Markov model by studying a structured variation. 
 
\textbf{Random Walk Models.} Since our model is a special case of a higher-order Markov chain,
we note that there are relationships with a variety of enhanced Markov models.
First our RHOMP model defines a specific form of the Additive Markov Process (AMP)~\cite{markov1971extension}, where the transition probability is a summation of a series of memory functions that are restricted on the next state and one history state each. Applications of the AMP include LAMP~\cite{kumar2017linear} (see Section~\ref{sec_intro}), the gravity models~\cite{zhang2015spatiotemporal}, and some dynamical systems in physics~\cite{melnyk2006memory,usatenko2009random} where the memory function is empirically estimated for the application of binary state.
In addition to the AMP, recent innovations include 
new recovery results on mixture of Markov chains~\cite{gupta2016mixtures} (a special case of HMM), which 
assumes a small set of Markov chains that model various classes of latent indent;
and the spacey random walk~\cite{benson2016spacey,benson2015tensor,wu2016general} as a non-Markovian stochastic 
	process that utilizes higher-order information based on the empirical occupation of states.

\textbf{Tensor Factorization.} As already discussed, our work is directly related to the pairwise 
interaction tensor factorization (PITF) method proposed
by Rendle in~\cite{rendle2010pairwise,rendle2010factorizing}, where the task is to generate \textit{tag}
recommendations given the $\{\textit{user, item}\}$ combination. The PITF model is learned from a binary
tensor of triple $\{\textit{user, item, tag}\}$ by bootstrap sampling from pairwise ranking constrains.
Our work differs in the aspect of problem formulation, model construction and parameter optimization.
The RHOMP model is also a special case of both the canonical/PARAFAC and Tucker decompositions~\cite{kolda2009tensor}.


\section{Experiments} 
\label{sec_exp}
\begin{table}[b]
\vspace*{-3ex}
\caption{Dataset characteristics in terms of the number of states, transitions and trails}
\label{table_dataset}
\begin{center}
\vspace*{-1ex}
\begin{tabularx}{0.95\linewidth}{Z Z Z Z}
\toprule
 & \# states & \# transitions & \# trails \\
\cmidrule{2-4}
LastFM & 17,341 & 2,902,035 & 195,499  \\
BeerAdvocate & 2,324 & 1,348,903 & 35,629  \\
BrightKite & 11,465 & 400,340 & 125,437  \\
Flickr & 7,608 & 1,212,674 & 97,563  \\
FourSQ & 344 & 198,503 & 1,480  \\
\bottomrule
\end{tabularx}
\end{center}
\end{table}
We evaluate our RHOMP method on the ability to predict subsequent states in a user trail in terms of precision and mean reciprocal rank (MRR) on five different types of data (Section~\ref{sec_data}). We then present the
results of a second-order
(i.e., $m=2$) RHOMP compared with baseline methods in Section~\ref{sec_result} and study over-fitting of the training data in Section~\ref{sec_ana}.
Then we study what happens for higher-order (i.e., $m>2$) models in Section~\ref{sec_exp2}.  In all cases, the RHOMP model offers a considerable improvement 
to existing methods.

\begin{table*}[!htb]
\vspace*{-1ex}
\caption{Mean Reciprocal Rank (MRR) results of various methods on all datasets.
Bold indicates the best mean performance, and $\pm$ entries
are the standard deviations over 5 trials.
Our proposed RHOMP ($m=2$) has the best performance in all datasets.}
\label{table_mrr}
\begin{center}
\begin{tabularx}{1.0\linewidth}{Z Z Z Z Z Z Z Z}
\toprule
 & MC1 & MC2 & Kneser1 & Kneser2 & PITF & LME & RHOMP \\
\cmidrule{2-8}
LastFM & $0.071 \pm 0.001$ & $0.068\pm 0.001$ & $0.066 \pm 0.001$ & $0.090 \pm 0.002$ & $ 0.058 \pm 0.001 $ & $0.062 \pm 0.001$ & $\bm{0.100} \pm  0.001$\\
BeerAdvocate & $0.080 \pm 0.000$ & $0.034\pm 0.001$ & $0.079 \pm 0.000$ & $0.076 \pm 0.001$ & $0.067 \pm 0.002$  & $0.067 \pm 0.001$ & $\bm{0.090} \pm  0.000$\\
BrightKite & $0.551 \pm 0.002$ & $0.540\pm 0.002$ & $0.554 \pm 0.002$ & $0.599 \pm 0.002$ & $ 0.440 \pm 0.007 $ & $0.529 \pm 0.002$ & $\bm{0.603} \pm  0.002$\\
Flickr & $0.358 \pm 0.003$ & $0.306\pm 0.004$ & $0.350 \pm 0.001$ & $0.379 \pm 0.001$ & $ 0.313 \pm 0.004 $ & $0.333 \pm 0.003$ & $\bm{0.410} \pm  0.001$\\
FourSQ & $0.138 \pm 0.004$ & $0.092\pm 0.003$ & $0.146 \pm 0.005$ & $0.155 \pm 0.004$ & $0.120 \pm 0.003 $ & $0.113 \pm 0.002$ & $\bm{0.181} \pm  0.003$\\
\bottomrule
\end{tabularx}
\end{center}
\end{table*}

\begin{figure*}[!htp]
\vspace*{-1ex}
  \centering
  \includegraphics[trim={1.3cm 5.2cm 1.5cm 5.2cm}, width=.3\linewidth]{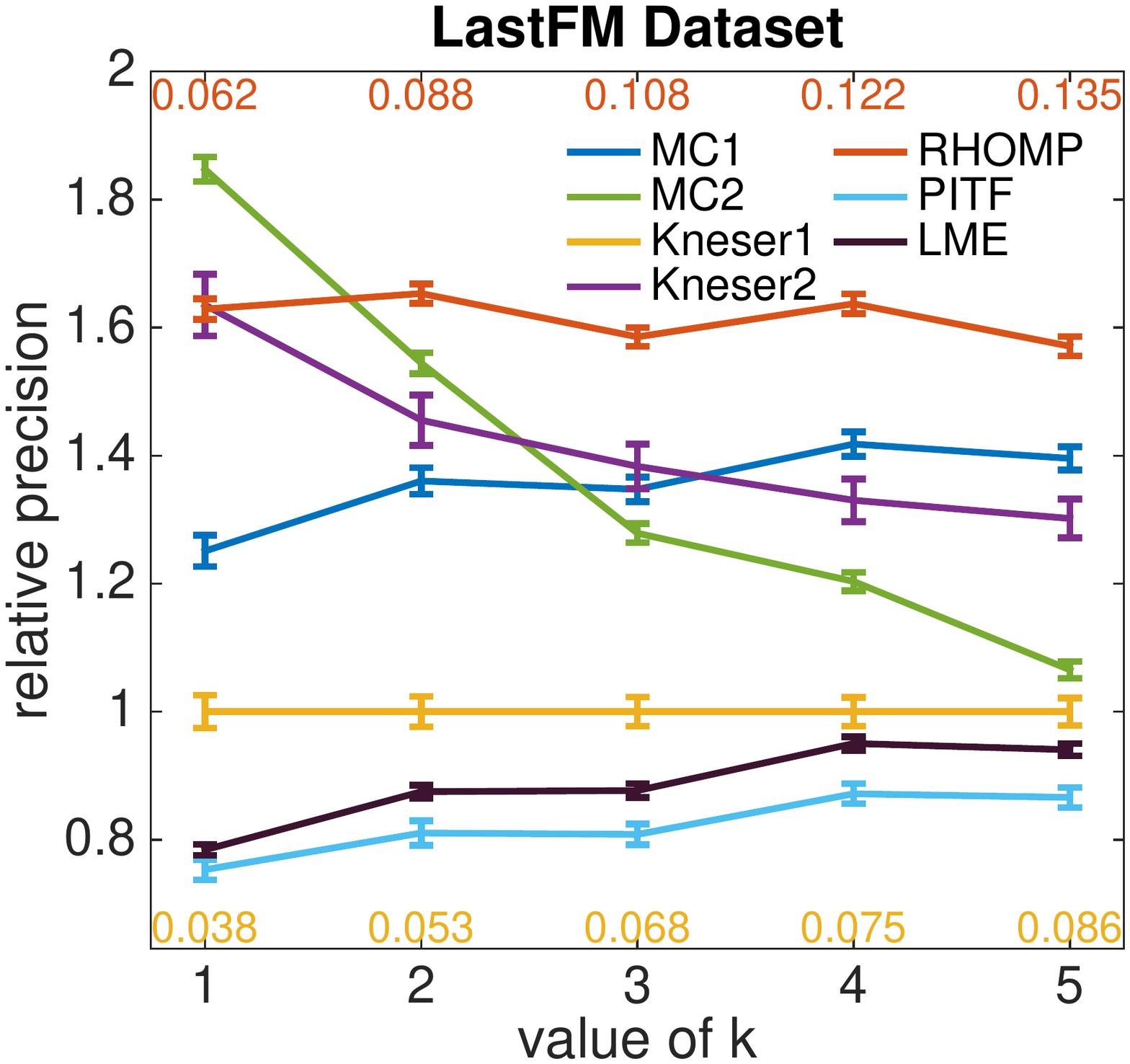}
  \includegraphics[trim={1.3cm 5.2cm 1.5cm 5.2cm}, width=.3\linewidth]{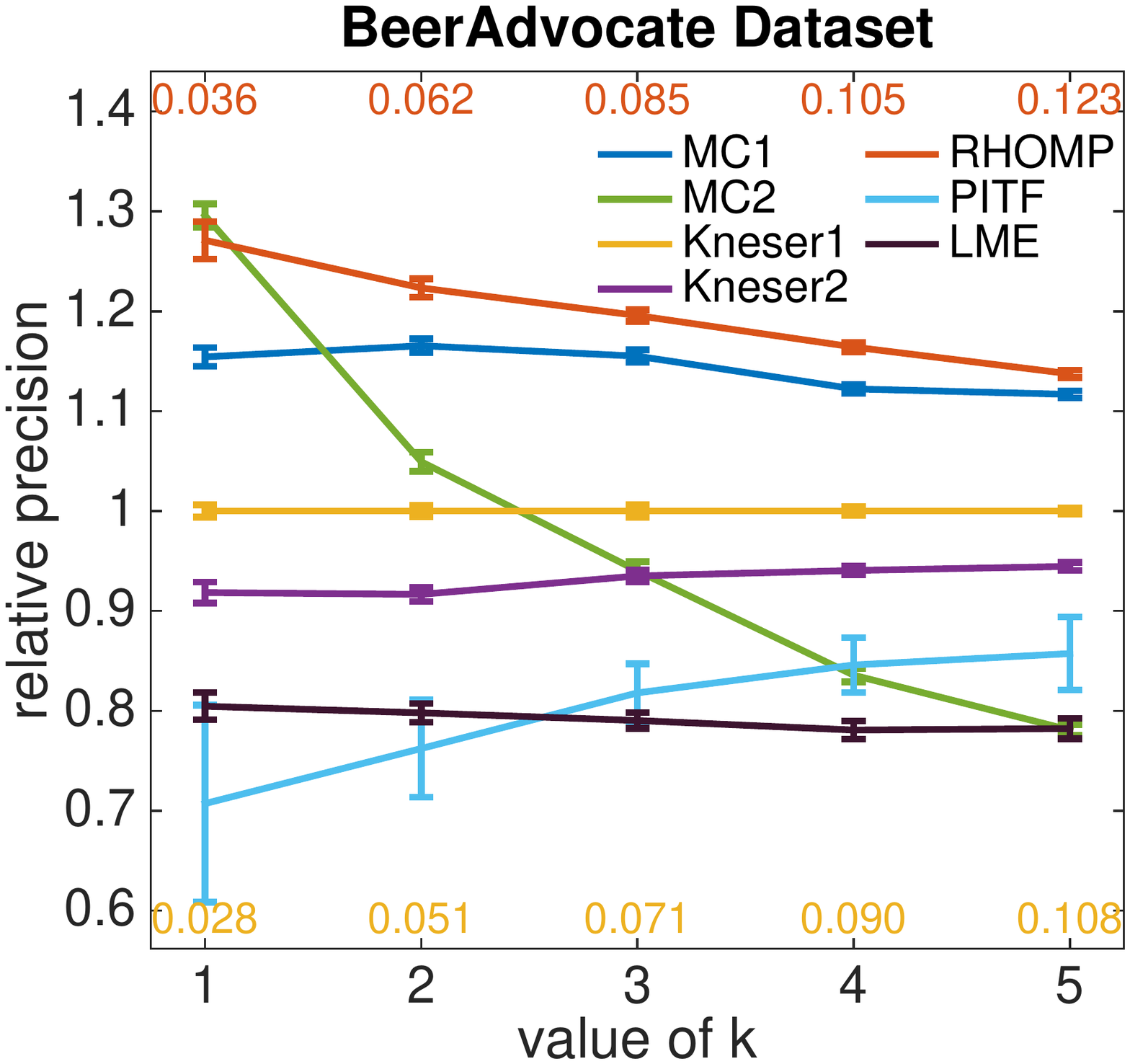}
  \includegraphics[trim={1.3cm 5.2cm 1.5cm 5.2cm}, width=.3\linewidth]{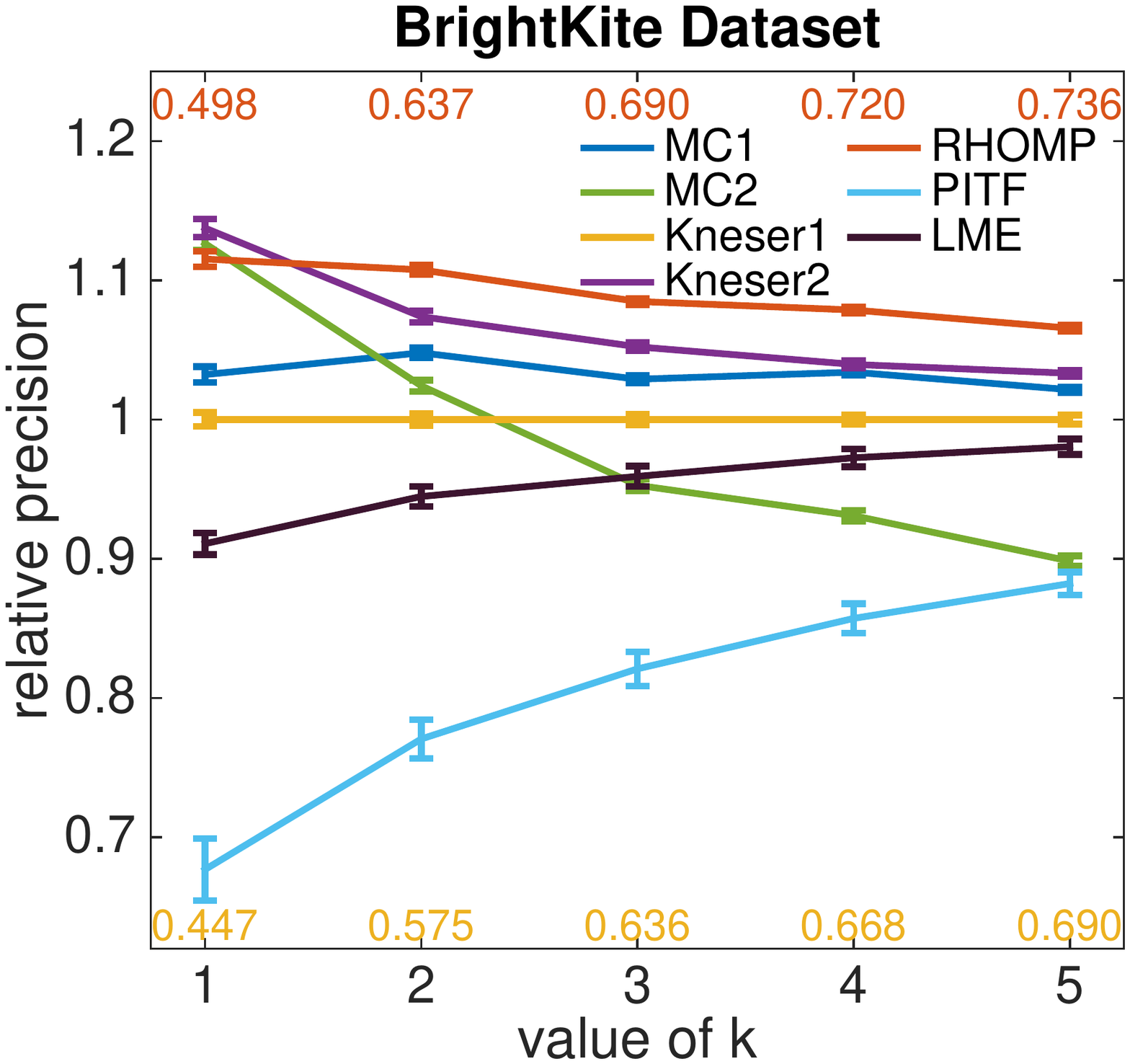}
  \includegraphics[trim={1.3cm 5.2cm 1.5cm 5.2cm}, width=.3\linewidth]{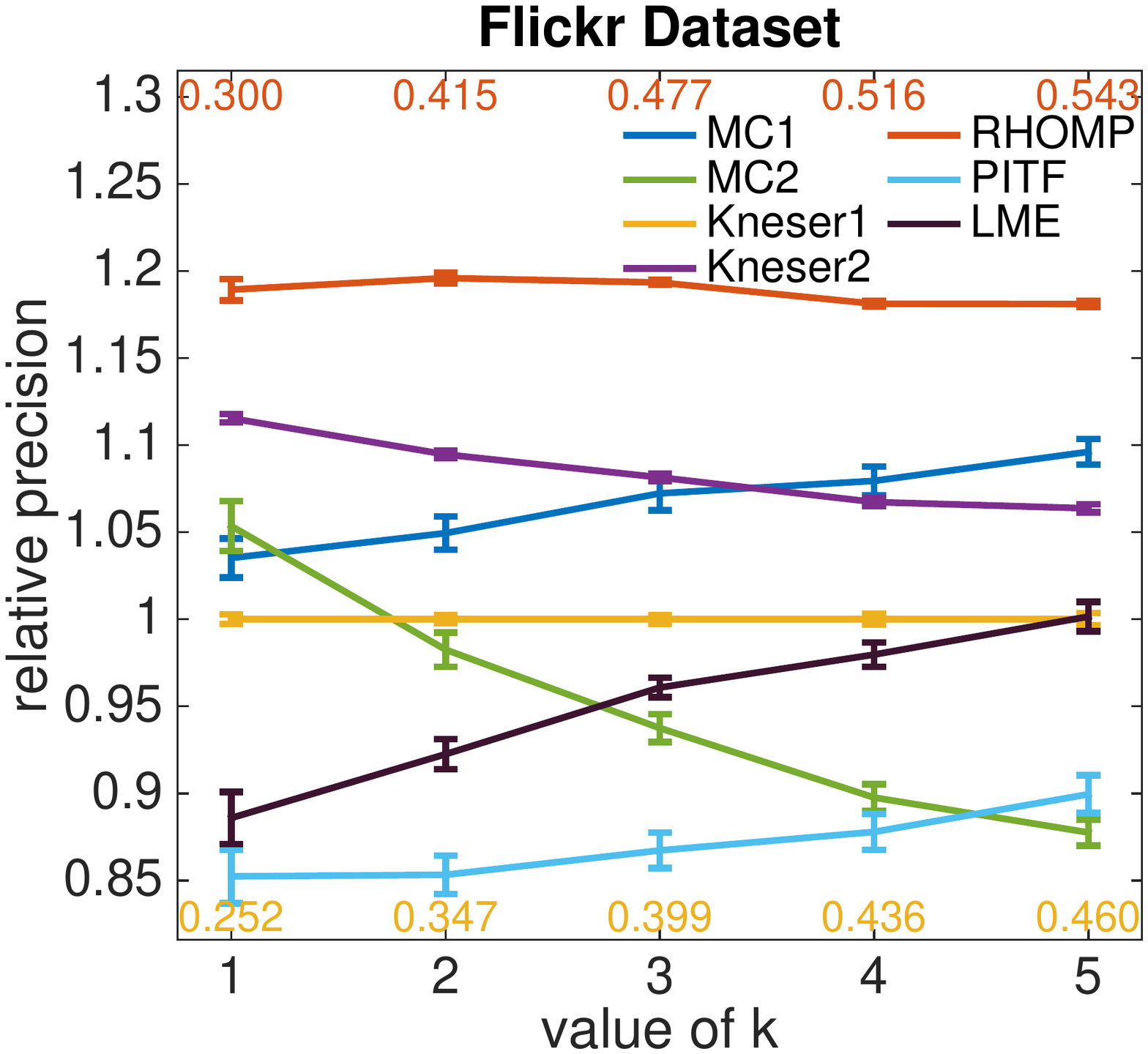}
  \includegraphics[trim={1.3cm 5.2cm 1.5cm 5.2cm}, width=.3\linewidth]{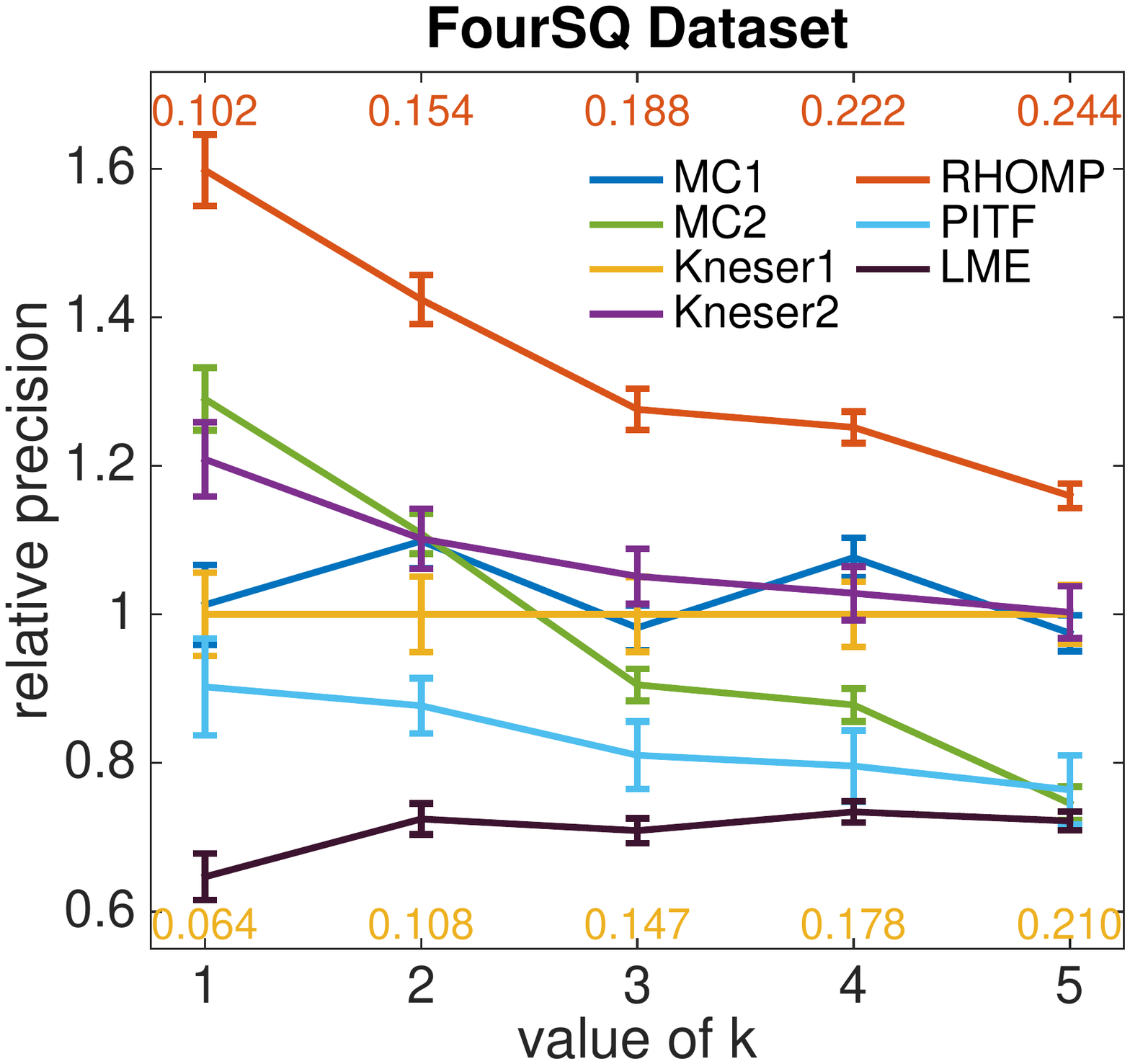}
  \caption{Relative precision results on all datasets with $k = 1,2,3,4,5$. We use Kneser1 as the baseline, and
  the relative precision is calculated as the precision ratio to that of Kneser1. The error bars in
  the figure are the standard deviations over 5 trials. The numbers in the bottom and the top
  of the figures denote
  the absolute precisions for the Kneser1 and our RHOMP method respectively.
   We see that our RHOMP has noticeable
  improvements over other methods in most datasets.}
  \label{fig_precision}
  \vspace*{-1ex}
\end{figure*} 


\subsection{Datasets and Evaluations Setup}
\label{sec_data}
The real datasets we use in our experiments cover several applications including:
product reviews, online music streaming, checkin locations of social network and photo uploads.
Every dataset is publicly available.
For all the datasets self-loops are removed
as we are mostly interested in predicting a non-trivial transition. Also we only consider states
that show up more than $20$ times. Simple statistics on each dataset are summarized in Table~\ref{table_dataset}, and we now describe them individually.


\noindent \textbf{LastFM}~\cite{celma2009music}
is a music streaming and recommendation website (\url{last.fm}).
We generate user trails as listening histories regarding different artists
over a one-year period (2008-05-01 to 2009-05-01).

\noindent \textbf{BeerAdvocate}~\cite{mcauley2013amateurs} consists of beer reviews
spanning more than 10 years up to November 2011 from \url{beeradvocate.com}.
We study the user trail as reviews over different brewers.

\noindent \textbf{BrightKite}~\cite{cho2011friendship} was a location-based social networking website
where users shared their locations by checking-in. We study the trails of location id.

\noindent \textbf{Flickr}~\cite{thomee2015new} contains 100 million Flickr photos/videos
provided by Yahoo! Webscope.
We extract the user trail based on geolocation (restricted to USA) of each upload after 2008-01-01.
Each longitude and latitude is mapped into a grid of approximate 10km by 10km, which constitutes the state. 

\noindent \textbf{FourSQ} is a location based check-in dataset created by Yang et al~\cite{yang2013fine} which
contains checkins from New York City from 24 October 2011 to 20 February 2012. We extract checkin place
category (e.g., bus station, hotel, bank) as state.

For experimental methods, we consider the following:

\noindent \textbf{MC1, MC2} are the first-order and second-order Markov chain methods respectively, where
the transition matrix is estimated based on maximum likelihood.

\noindent \textbf{Kneser1, Kneser2} are the interpolated Kneser-Ney smoothing methods~\cite{chen1996empirical} applied on the first-order
and second-order Markov chain methods respectively. This
is one of the best smoothing methods for n-gram language models, where it enables higher-order Markov chain
transitions to unseen n-grams. We set the discounting parameter as $n_1/(n_1+2n_2)$ by the method of leaving one out~\cite{chen1996empirical}, where $n_1$ and $n_2$ denote the number of n-grams that appear exactly once and twice respectively

\noindent \textbf{PITF} is the pairwise interaction tensor factorization method~\cite{rendle2010pairwise} computed on the higher-order Markov chain
estimate. Because we use ranking, we consider general positive and negative
entries as valid for the factorization. We implement the fitting method
ourselves to handle the sparsity in our data.
As suggested in the paper~\cite{rendle2010pairwise}, the hyperparameters are $\lambda = 5\cdot10^{-5}$ and $\alpha = 0.05$ with initialization from
$N(0,0.01)$. We set the rank number $k$ as $5\%$ of the total number of states, which is enough to accurately capture the user behavior~\cite{rendle2010pairwise}. The number of iterations for the stochastic gradient descent is 10,000,000.

\noindent \textbf{LME} is short for Latent Markov Embedding~\cite{chen2012playlist}. It is an machine learning
algorithm that embeds states into Euclidean space based on a regularized maximum likelihood principle. We set the dimension $d = 50$ and
use default values all other parameters (e.g., learning rate, epsilon).
(We tried various values of $d$ spanning from $2$ to $100$, we find as $d$ increases
the performance also gets better, for $d>50$ the improvements are negligible. So we use $d=50$
to make the algorithm efficient.) We use the authors' implementations. 

\noindent \textbf{RHOMP} is our proposed method in this paper. We use initial step size as
$\gamma_0 = 1$, and set $\epsilon = 10^{-5}$ as the algorithm termination criterion when
the relative improvement over log likelihood is below this point. For the hyperparameter $\alpha$
we use $n=15$ Chebyshev nodes for the interpolation.


The datasets are randomly split into a training set ($60\%$) and testing set ($40\%$) based on keeping whole trails together. And for each
dataset we conduct experiments over $5$ random repetitions and present the average results.
For evaluations we use precision over top $k$ outputs to measure the accuracy of each method. It is calculated over all individual transitions in the testing set
as
\[\text{Precision}_k = \frac{\text{\# true transitions within top $k$ algorithmic results}}{\text{\# total transitions}}.\]
Besides precision, which measures the accuracy of the top outputs from algorithms, we also provide results on Mean Reciprocal Rank (MRR). The reciprocal rank of an output is the inverse of the rank of the ground truth answer and MRR measures the overall ranking compared to the groundtruth.  
For both measures, we want large scores close to 1. 

\subsection{General Results}
\label{sec_result}
First we compare our RHOMP ($m=2$) with other baseline methods in terms of precision and MRR score.

\noindent \textbf{MRR score.}
Table~\ref{table_mrr} depicts the results on the MRR score.
In all datasets, RHOMP has the highest score. From the table we see that MC1 outperforms
the LME method. The LME has the advantage of embedding the states into Euclidean space for
applications like visualization or clustering. However the embedding could potentially cause the information
loss, thus make the prediction less accurate.
 And we notice that MC2 has
the lowest scores in many cases (i.e., BeerAdvocate, Flickr and FourSQ datasets), and
the MRR scores drop compared to MC1. The Kneser-Ney smoothing modification makes the MC2 estimate more
robust, and in most cases outperforms the MC1, although such advantage is limited compared to that from our RHOMP method. The PITF method is also not competitive.

\noindent \textbf{Precision score.} Figure~\ref{fig_precision} shows the algorithms performances in terms of relative precision.
Many of the observations from Table~\ref{table_mrr} on the MRR score also apply here.
In addition we find MC2
is often able to provide one accurate output, so the relative precision ($k=1$) is actually quite good
in most cases. However as $k$ increase the relative precision drops rapidly
due to the fact that MC2 is not able to generate a few more reliable outputs.
This limits the application of MC2 because in the task of recommendation, it is important for the algorithm to
generate a few instead of one unique candidate state. 
Another observation is that the results of PITF over different trials are often more volatile
because of its underlying stochastic gradient descent solver.
We also find that for some datasets (e.g., BeerAdvocate and FourSQ) the relative precisions of our RHOMP decrease
as $k$ increases. The reason is that as $k$ increases, the prediction task itself becomes easier, so it is hard
to maintain the same advantage (i.e., constant relative precision). Same reason for the fact the inferior methods
like LME and PITF can catch up as $k$ increases.

\noindent \textbf{Algorithm Runtime.}
Table~\ref{table_runtime} shows the runtime for each method. The RHOMP approach takes slightly more time to train than Kneser-Ney methods, but has faster prediction and testing. It is slower than the pure MC methods, but much faster than PITF, LME. 

\begin{table}[!tb]
\caption{Algorithm runtime (in minutes) for the three large datasets in terms of training time (left) and testing time (right). The experiments are run on a single-core of a 2.5Ghz Xeon CPU. Both MC1 and MC2 ran in under a minute.}
\label{table_runtime}
\centering
\begin{tabularx}{1.0\linewidth}{Z H H Z Z Z Z Z}
\toprule
 & MC1 & MC2 & Kneser1 & Kneser2 & PITF & LME & RHOMP \\
\cmidrule{2-8}
LastFM & $ 1 /1 $ & $ 1/- $ & $ 2/4 $ & $3/75$ & $493/1980$ & $3188/57$ & $52/2$\\
BrightKite & $ -/- $ & $ -/- $ & $ <\!\!1/1 $ & $<\!\!1/4$ & $236/71$ & $1153/22$ & $3/1$\\
Flickr & $ -/- $ & $ -/- $ & $ <\!\!1/1 $ & $1/8$ & $168/97$ & $764/11$ & $6/1$\\
\bottomrule
\end{tabularx}
\end{table}

\subsection{Analysis on Overfitting}
\label{sec_ana}
\begin{table*}[!htb]
\caption{Precision (k=3) results for testing set (the left number) vs train set (the right number) that we use to estimate overfitting. 
Bold denotes the highest testing result.
We judge the overfitting effects as  $\{$MC2, Kneser2$\}$ $\gg$ $\{$MC1, Kneser1, RHOMP$\}$ $>$ $\{$LME, PITF$\}$. But LME and PITF have poor test performances.
}
\vspace*{-1ex}
\label{table_tt}
\begin{center}
\begin{tabularx}{1.0\linewidth}{Z Z Z Z Z Z Z Z}
\toprule
 & MC1 & MC2 & Kneser1 & Kneser2 & PITF & LME & RHOMP \\
\cmidrule{2-8}
LastFM & $0.092 / 0.216 $ & $0.087 / 0.961 $ & $0.068 / 0.109 $ & $0.094 / 0.792 $ & $ 0.055 / 0.061 $ & $0.060 / 0.083 $ & $\bm{0.108} / 0.218 $\\
BeerAdvocate & $0.082 / 0.115 $ & $0.067 / 0.777 $ & $0.071 / 0.074 $ & $0.066 / 0.490$ & $ 0.058 / 0.059 $  & $0.056 / 0.600 $ & $\bm{0.085} / 0.109 $\\
BrightKite & $0.654 / 0.782 $ & $0.606 / 0.940 $ & $0.636 / 0.729 $ & $0.669 / 0.868 $ & $ 0.522 / 0.575$ & $0.610 / 0.665 $ & $\bm{0.690} / 0.796$\\
Flickr & $0.428 / 0.496 $ & $0.374 / 0.832 $ & $0.399 / 0.440 $ & $0.432 / 0.710$ & $ 0.346 / 0.359 $ & $0.384 / 0.401 $ & $\bm{0.477} / 0.530 $\\
FourSQ & $0.145 / 0.199 $ & $0.133 / 0.778 $ & $0.147 / 0.174$ & $0.155 / 0.524 $ & $ 0.119 / 0.126 $ & $0.104 /0.137 $ & $\bm{0.188} / 0.241$\\
\bottomrule
\end{tabularx}
\end{center}
\end{table*}

One of the reasons we propose the RHOMP method is to improve the higher-order
Markov chain method in the aspect of overfitting. In this section we analyze
the results in detail and give an explanation on the performances of different methods.

First we show the comparison between training and testing performance in Table~\ref{table_tt}.  We present the result using precision with $k=3$ as it is
representative of the remaining results. Both PITF and LME had the least overfitting effect as the testing and training precisions are very close. However, their testing
precisions are also low. 
The training precision of MC2 is the
highest for all datasets. But these are often more than $10$ times of 
the corresponding testing precisions. So MC2 is a highly overfitting method.
Kneser2 also has
comparatively high training precision since it is a second-order method and tends to fit
the training data well. But the performance on testing set is better than MC2 as it uses lower-order information to smooth the output. The methods MC1, Kneser1 and RHOMP have a good training and testing balance, and among them, our RHOMP has superior testing performances.

Next we analyze the performance on individual states to help understand the behaviors of different
algorithms. We sort all the states from high to low based on the total number of counts of each state in
the training set. Our aim is to look at how testing accuracy correlates with these counts. 
Figure~\ref{fig_detail} shows the precision ($k=3$) comparisons (i.e., MC1 vs MC2 vs RHOMP and
Kneser1 vs Kneser2 vs RHOMP) on the Flickr dataset based on counts of the states.  We aggregate small sets of states based on their counts into baskets of at least 1000 transitions and 5 states. 
We find that all methods show precision drops when predicting infrequent states, with MC2 being affected most. Here, RHOMP does the best out of all methods, which reflects its ability to avoid overfitting.

\begin{figure}[tb]
  \centering
  \includegraphics[trim={1.5cm 5.2cm 2.5cm 4.8cm}, width=.49\linewidth]{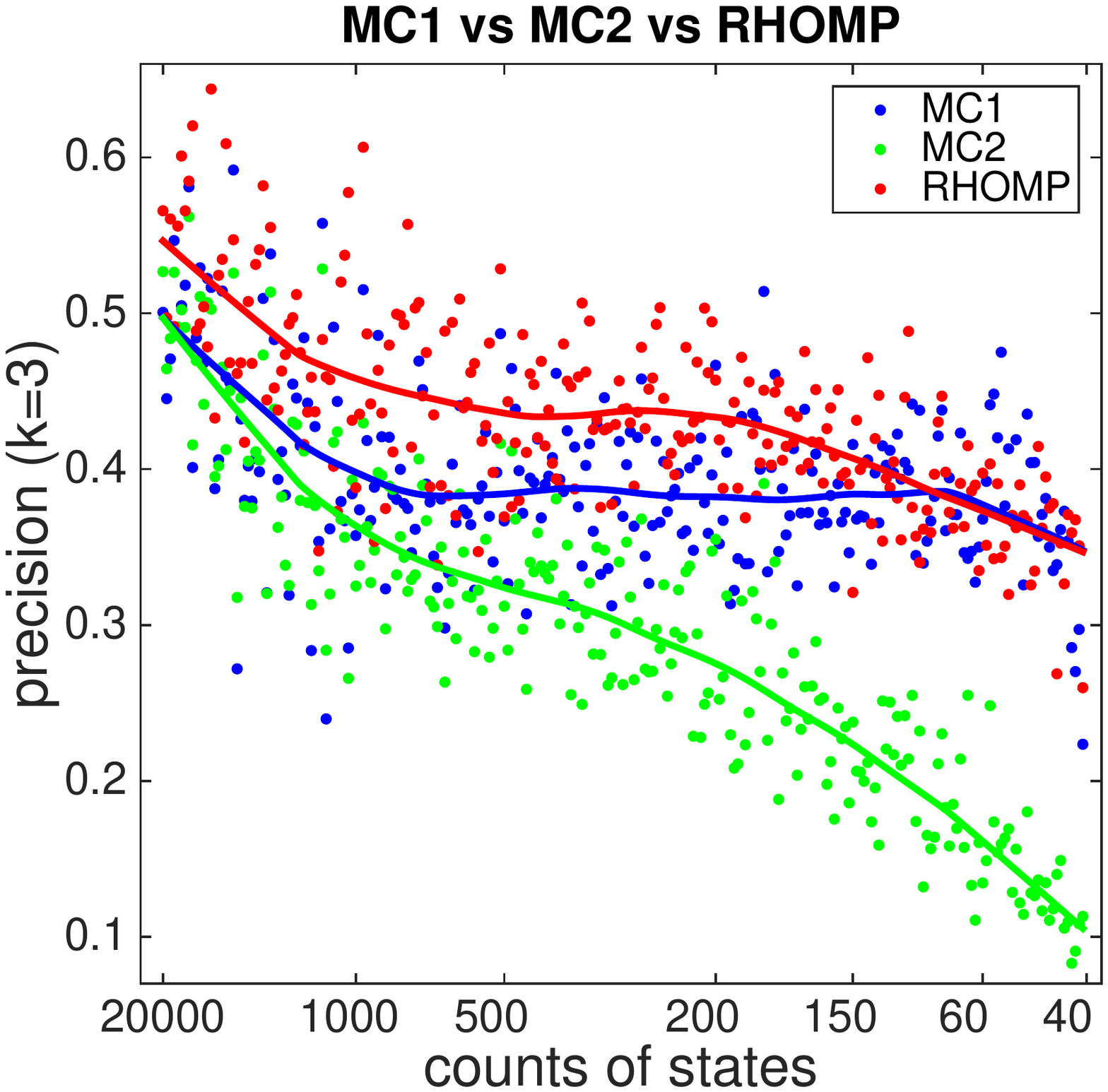}
  \includegraphics[trim={1.5cm 5.2cm 2.5cm 4.8cm}, width=.49\linewidth]{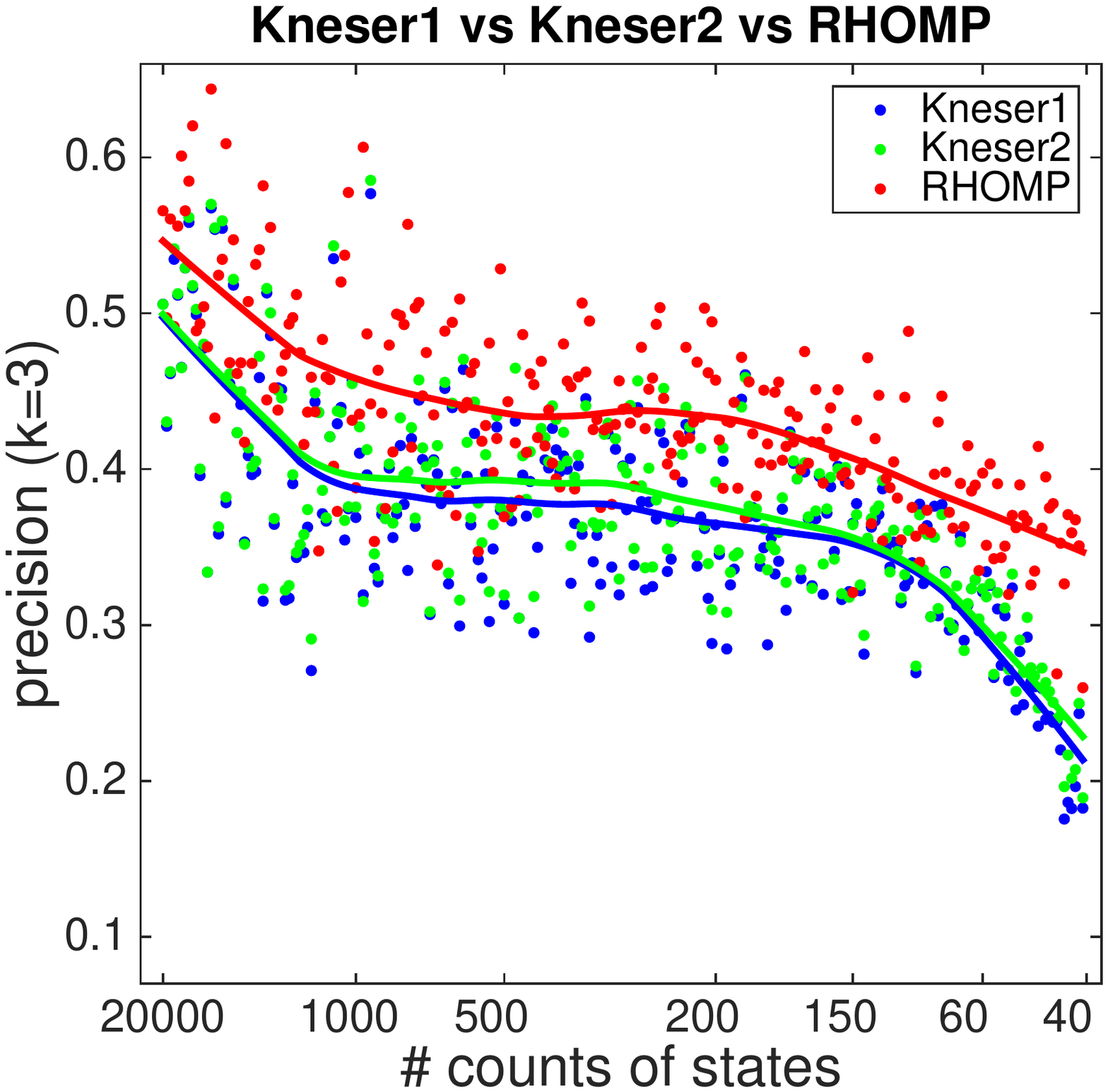}
  \vspace*{-1ex}
  \caption{State-wise precision ($k=3$) comparison on MC1 vs MC2 vs RHOMP (left figure)
  and Kneser1 vs Kneser2 vs RHOMP (right figure) on the Flickr dataset.
  Each marker represents the average precision over a group of states.
 The curves are fit from the scatter points
  based on Locally Weighted Scatterplot Smoothing (LOWESS).}
  \vspace*{-1ex}
  \label{fig_detail}
\end{figure}

\begin{figure*}[!htp]
  \centering
  \includegraphics[trim={1.6cm 5.2cm 1.6cm 4.8cm}, width=.195\linewidth]{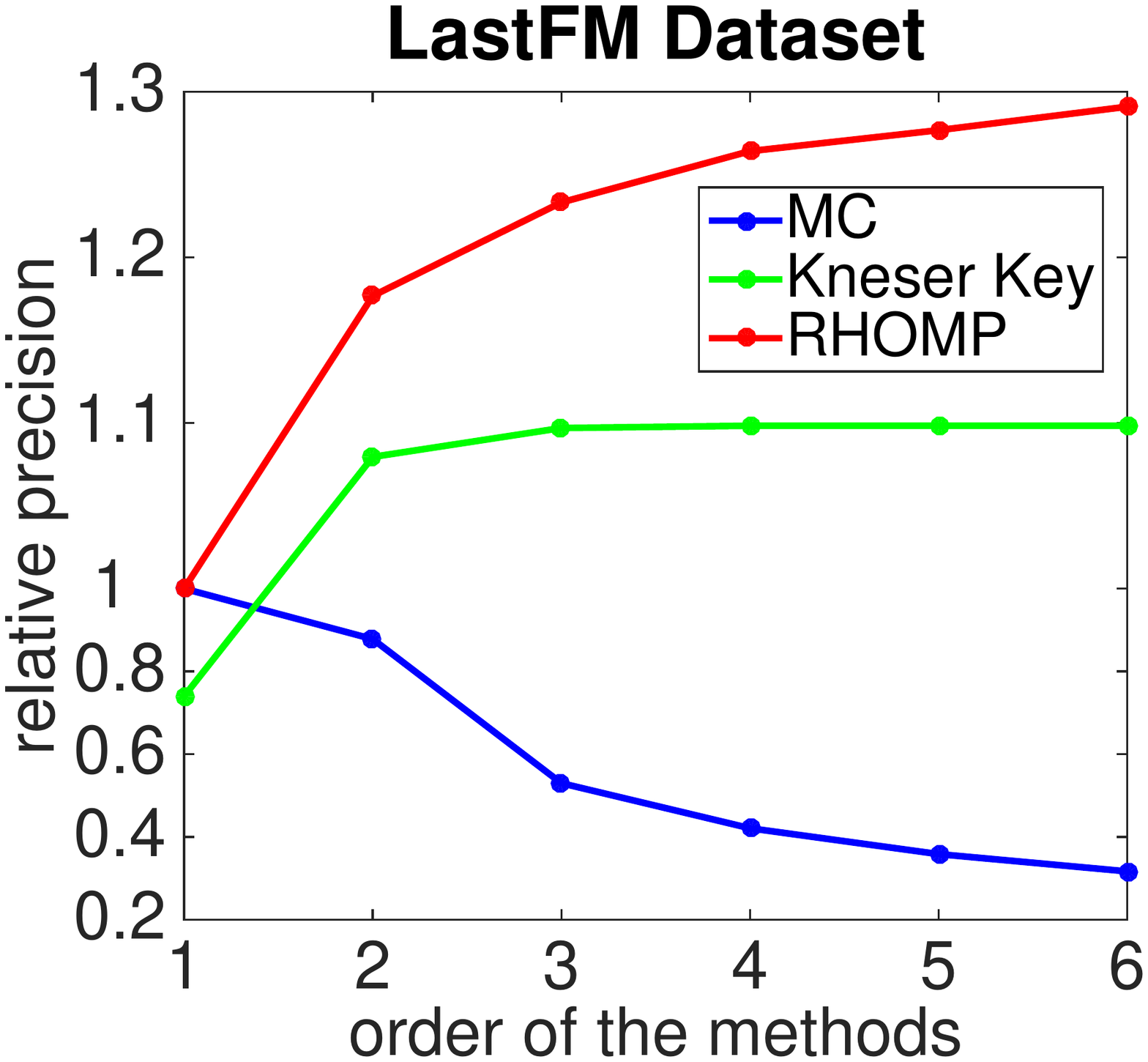}
  \includegraphics[trim={1.6cm 5.2cm 1.6cm 4.8cm}, width=.195\linewidth]{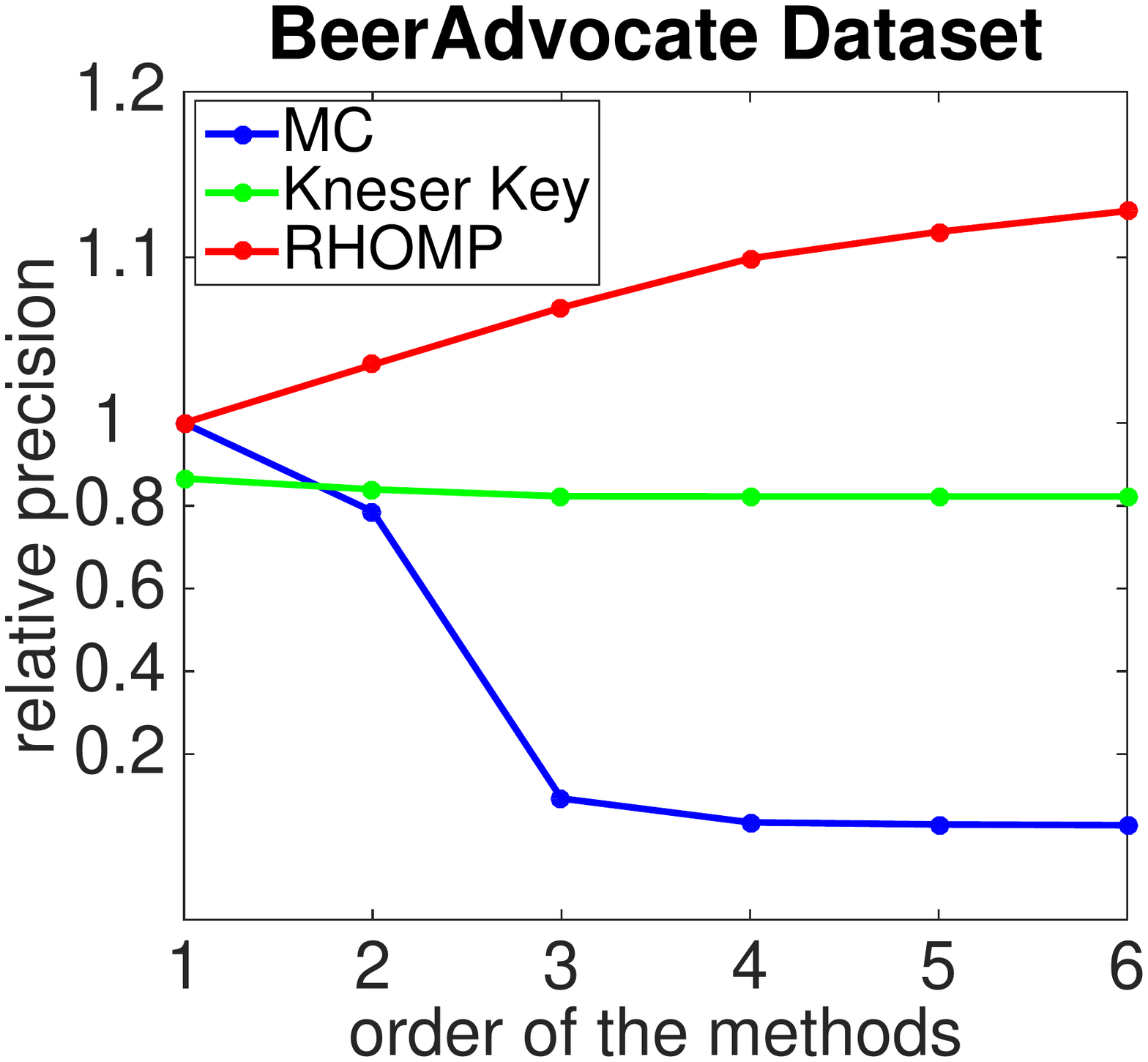}
  \includegraphics[trim={1.6cm 5.2cm 1.6cm 4.8cm}, width=.195\linewidth]{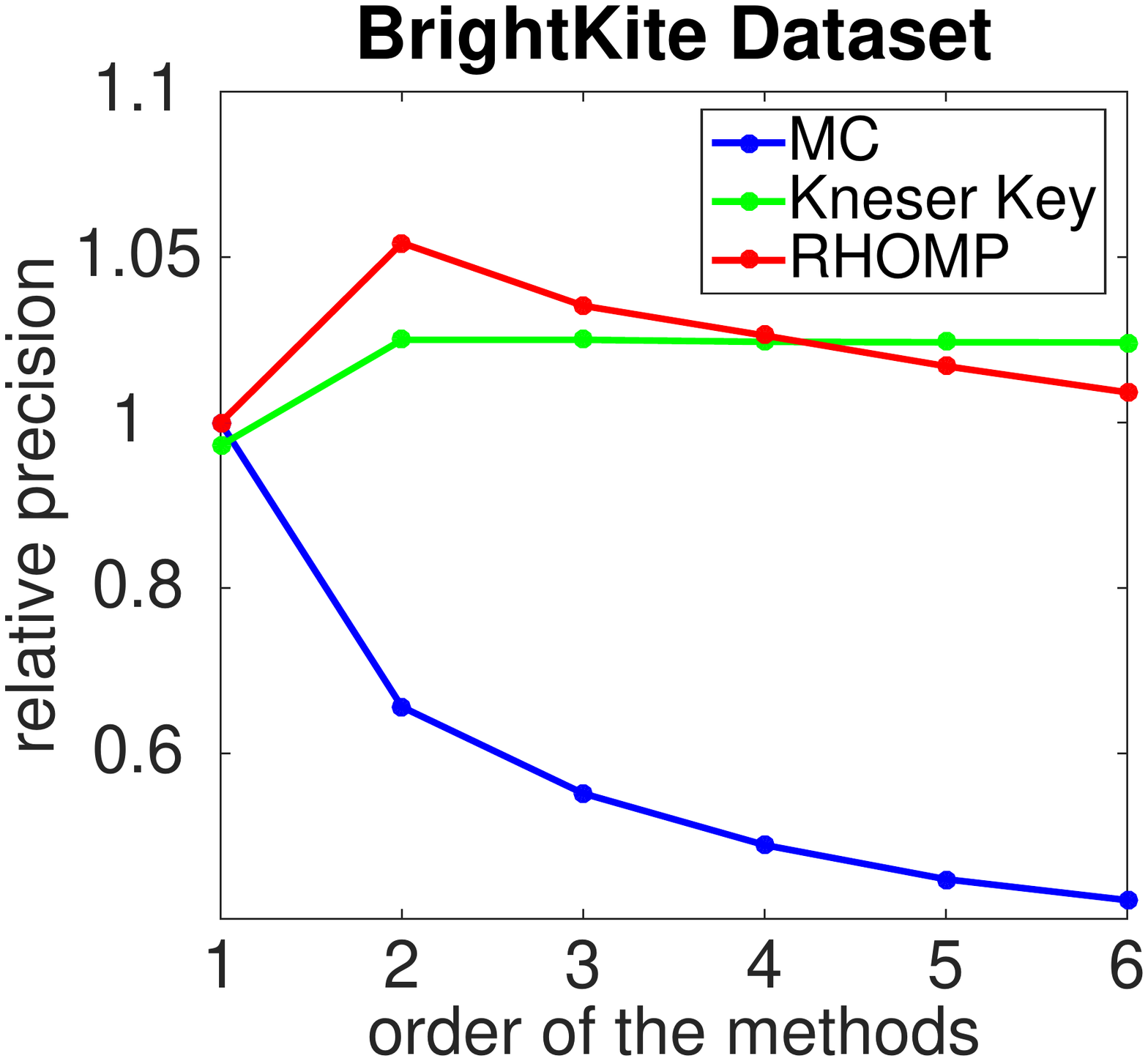}
  \includegraphics[trim={1.6cm 5.2cm 1.6cm 4.8cm}, width=.195\linewidth]{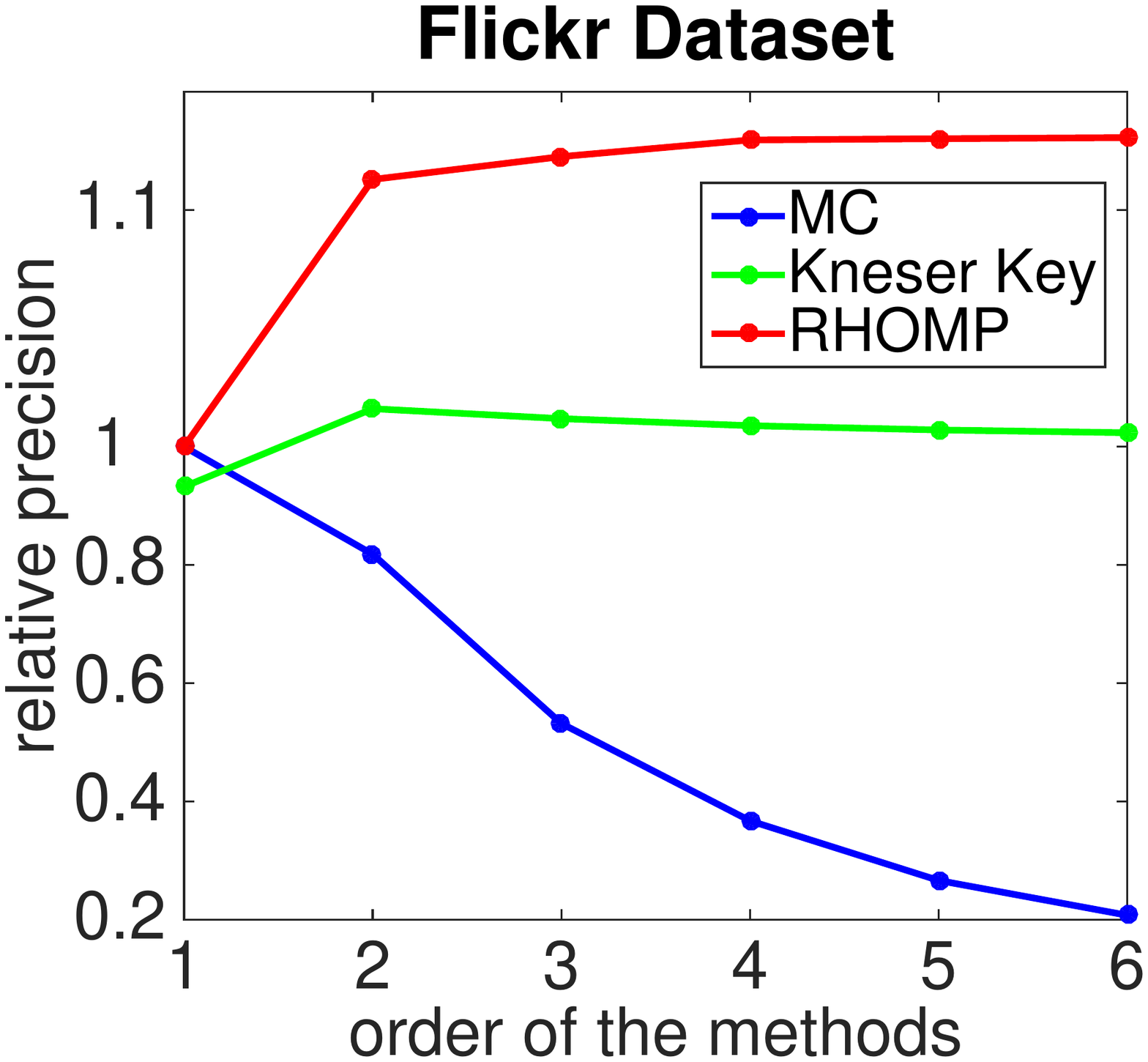}
  \includegraphics[trim={1.6cm 5.2cm 1.6cm 4.8cm}, width=.195\linewidth]{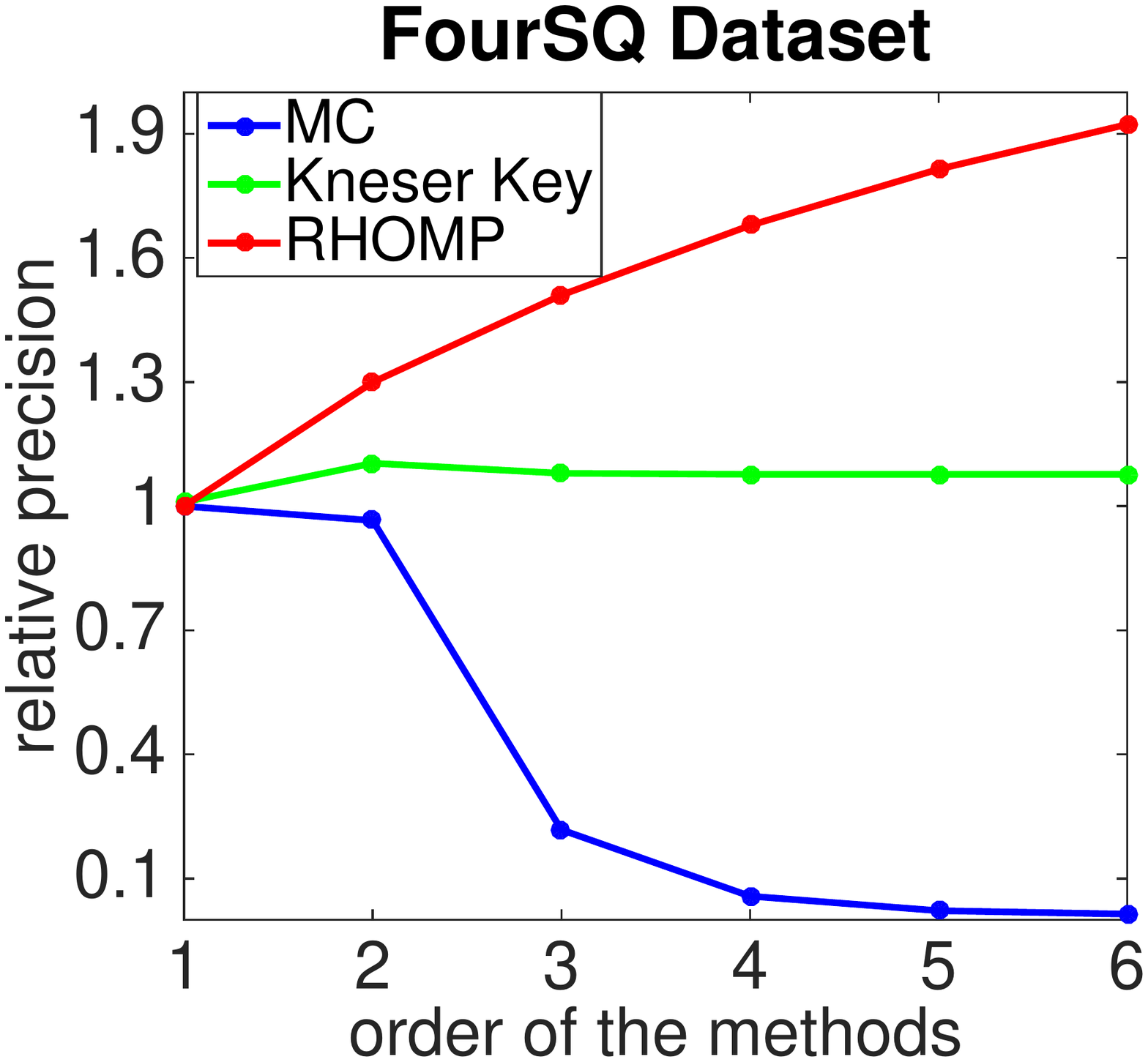}
  \caption{Relative precision ($k=3$) vs order of the methods: MC, Kneser-Ney smoothing and RHOMP.
  The relative precision is the precision ratio to that from MC1 of the corresponding datasets.
  Note that the y-axis may not be scaled linearly to make the figures more clear.}
  \vspace*{-2ex}
  \label{fig_order}
\end{figure*}

\subsection{Analysis on Higher-order Approaches} 
\label{sec_exp2}
In the previous sections, we analyze the results for first and second-order approaches. Now we study the behavior
as the order varies. Figure~\ref{fig_order} shows change in performance as the order 
increases for the three frameworks: MC, Kneser-Ney smoothing and RHOMP. For
the cases when the history states length is smaller than the order, we use the approach with the correct
order to generate the prediction.

For the MC framework, higher-order approaches make the prediction less accurate. This occurs because
these methods overfit the training data and there are more ways to overfit for a higher-order chain. 
For the Kneser-Ney smoothing approaches, in most cases (except BeerAdvocate dataset) there are
improvements moving from first-order to second-order. However the improvements are slight.
For order $>2$, there are usually either no clear improvements or small performance dips. The reason is that
as the order increase, the higher-order transition become very sparse, and could easily encounter an unseen 
higher-order state. So in this case the algorithm will frequently
seek the prediction from a lower-order approach.

For the RHOMP framework, there are improvements for each dataset when moving from MC1 to RHOMP with order $=2$,
and for order $>3$, the results further improve.
Compared to MC and Kneser-Ney smoothing frameworks, The RHOMP is more robust in 
terms of not decreasing the precision as order increases, with the exception of BrightKite dataset. 
In BrightKite, the average trail length is around $3$, so there is insufficient information to train
higher-order models and we lack the lower-order fallback in Kneser-Ney. 


\section{Summary \textit{\&} Future Work } 
\label{sec_con}
In this paper we study the problem of modeling user trails, which encode useful information
for downstream applications of user experiences, recommendations and advertising. We propose a new class of structured higher-order Markov chains which we call 
the retrospective higher-order Markov process (RHOMP). This model preserves the higher-order nature of user trails
without risks of overfitting the data. A RHOMP can be estimated from data via a projected gradient descent algorithm we propose for maximum likelihood estimation (MLE).
In the experiments, we find that RHOMP is superior in terms of precision and mean reciprocal rank compared to other methods. Also RHOMP is robust for higher-order chains when there is data available.

There are several directions to extend this work. First it would be interesting to 
explore other forms of retrospection that allow more interaction between the history states. (Note that the current
approach in this paper selects a single state during the retrospective process). This will allow to model
the case when certain combined history states have strong evidence in terms of transition patterns.
Second it would also be useful to extend this framework in terms of personalization. This can be achieved by a
tensor factorization approach or a collaborative filtering method. Lastly we also would like to 
embed time information into our prediction either by modeling the event time directly or using it as a side information to help generate a non-stationary process where the random walk behavior could change overtime.


\fontsize{8}{9}\selectfont

\textbf{Acknowledgements.} This work was supported by NSF IIS-1422918, CAREER award CCF-
1149756, Center for Science of Information STC, CCF-093937; DOE award DE-SC0014543; and the
DARPA SIMPLEX program.
\bibliographystyle{abbrv}
\bibliography{refs_simplified} 

\end{document}